\newif\ifdraft{}
\newif\iffull{}
\newif\ifsubmission{}
\newif\ifieee{}
\newif\ifacm{}



\acmfalse{}
\drafttrue{}
\ieeefalse{}
\fullfalse{}
\iffull{}
  \submissionfalse{}
\else
  \submissiontrue{}
  \ieeetrue{}
\fi

\ifsubmission{}
  \documentclass[conference]{IEEEtran}
  \pagestyle{plain}
\else
  \documentclass[11pt, english]{article}
  \usepackage[margin=1in]{geometry}
  \pagestyle{plain}
\fi
\usepackage[T1]{fontenc}
\usepackage{paralist}
\ifieee{}
\usepackage[cmex10]{amsmath}
\usepackage{amssymb,amsfonts,amsthm}
\else
\usepackage{amsmath,amsfonts,amsthm}
\fi
\ifieee{}
\interdisplaylinepenalty=2500
\fi
\usepackage{algorithm}
\usepackage[noend]{algpseudocode}
\usepackage{graphicx}
\usepackage{booktabs}
\usepackage{soul}

\usepackage{tabularx}
\usepackage{varwidth}
\usepackage{array}
\usepackage{adjustbox}
\usepackage{subcaption}
\usepackage{mwe}
\usepackage{textcomp}
\usepackage{comment}
\usepackage{multirow}
\usepackage{multicol}
\usepackage{mathtools}
\usepackage{dirtytalk}
\usepackage{csquotes}
\usepackage{xspace}
\usepackage{cite}
\usepackage{paralist}

\usepackage{thmtools}
\usepackage[english]{babel}
\usepackage{float}
\usepackage{pifont}
\usepackage{capt-of}
\iffull{}
  \usepackage[hyphens]{url}
\else
  \usepackage{url}
\fi
\usepackage[inline]{enumitem}
\usepackage[hidelinks]{hyperref}
\hypersetup{
	colorlinks=true,
	linkcolor=blue,
	urlcolor=blue,
	citecolor=blue,
}
\usepackage[capitalize]{cleveref}
\iffull{}
\usepackage[breaklinks]{hyperref}
\fi
\iffull{}
  \usepackage{balance}
\fi
\usepackage{array}
\newtheorem{theorem}{Theorem}[section]

\newtheorem{lemma}[theorem]{Lemma}
\newtheorem{definition}{Definition}[section]

\usepackage{tikz}
\usetikzlibrary{backgrounds}
\usetikzlibrary{shapes,shapes.geometric,shapes.symbols,arrows,positioning}
\tikzstyle{block} = [rectangle, draw, text width=1cm, 
                     text centered, rounded corners, 
                     minimum height=1cm, node distance=1cm]
\tikzstyle{process} = [circle, draw, text width=1cm, 
                     text centered, rounded corners, 
                     node distance=1cm]
\tikzstyle{line} = [draw, -latex']
\tikzstyle{bubble} = [cloud, draw, text width=1cm, 
                     text centered, rounded corners, 
                     minimum height=1cm, node distance=2cm]

\iffull{}
\usepackage{changepage}
\fi

\definecolor{myorange}{HTML}{d95319}
\makeatletter
\algnewcommand{\LineComment}[1]{\Statex{}\hskip\ALG@thistlm{}{\color{gray}\textrm{// #1}}}
\makeatother
\algnewcommand{\SectionComment}[2]{\Statex{}{\color{#2}\(\triangleright\) \textrm{#1}}}

\algnewcommand{\InlineComment}[1]{{\hspace{0.5em}\color{gray}\textrm{// #1}}}
\algnewcommand{\InlineCommentText}[1]{{\hspace{0.5em}\color{gray}{// #1}}}
\algnewcommand{\Phase}[1]{\SectionComment{#1}{myorange}}
\MakeRobust{\Call}
\DeclareCaptionFormat{algor}{%
	\hrulefill\par\offinterlineskip\vskip1pt%
	\textbf{#1#2}#3\offinterlineskip\hrulefill}
\DeclareCaptionStyle{algori}{singlelinecheck=off,format=algor,labelsep=space}
\algnewcommand\algorithmicswitch{\textbf{switch}}
\algnewcommand\algorithmiccase{\textbf{case}}
\algnewcommand\algorithmicassert{\texttt{assert}}
\algnewcommand\Assert[1]{\State{}\algorithmicassert(#1)}%
\algdef{SE}[SWITCH]{Switch}{EndSwitch}[1]{\algorithmicswitch\ #1\ \algorithmicdo}{\algorithmicend\ \algorithmicswitch}%
\algdef{SE}[CASE]{Case}{EndCase}[1]{\algorithmiccase\ #1}{\algorithmicend\ \algorithmiccase}%
\algtext*{EndSwitch}%
\algtext*{EndCase}
\algnewcommand\algorithmiccontinue{\textbf{continue}}
\algnewcommand\algorithmicbreak{\textbf{break}}
\algnewcommand\Continue{\algorithmiccontinue}
\algnewcommand\Break{\algorithmicbreak}
\algnewcommand{\IIf}[1]{\State\algorithmicif\ #1\ \algorithmicthen}
\algnewcommand{\EndIIf}{\unskip\ \algorithmicend\ \algorithmicif}
\algblock{As}{EndAs}
\algblock{On}{EndOn}
\algnewcommand\algorithmicas{\textbf{as}}
\algrenewtext{As}[1]{\algorithmicas\ #1}
\algtext*{EndAs}
\algnewcommand\algorithmicon{\textbf{on}}
\algrenewtext{On}[1]{\algorithmicon\ #1}
\algtext*{EndOn}

\usepackage{multirow}


\begin{document}

\def\titletext{
    Delphi: Efficient Asynchronous Approximate Agreement for Distributed Oracles}

\title{\titletext}
\author{
	\IEEEauthorblockN{Akhil Bandarupalli\IEEEauthorrefmark{1}, Adithya Bhat\IEEEauthorrefmark{2}, Saurabh Bagchi\IEEEauthorrefmark{1}, Aniket Kate\IEEEauthorrefmark{1}\IEEEauthorrefmark{3}, Chen-Da Liu-Zhang\IEEEauthorrefmark{4}\IEEEauthorrefmark{5}, Michael K. Reiter\IEEEauthorrefmark{6}\IEEEauthorrefmark{7}}
	\IEEEauthorblockA{\IEEEauthorrefmark{1}Purdue University \emph{\{abandaru, sbagchi, aniket\}@purdue.edu}}
	\IEEEauthorblockA{\IEEEauthorrefmark{2}Visa Research \emph{haxolotl.research@gmail.com}}
	\IEEEauthorblockA{\IEEEauthorrefmark{4}Lucerne University of Applied Sciences and Arts \emph{chen-da.liuzhang@hslu.ch}}
	\IEEEauthorblockA{\IEEEauthorrefmark{6}Duke University \emph{michael.reiter@duke.edu}}
	\IEEEauthorblockA{\IEEEauthorrefmark{3}Supra Research, \IEEEauthorrefmark{5}Web3 Foundation,
    \IEEEauthorrefmark{7}Chainlink Labs}
}


\maketitle
\newcommand{\name}{{\sc{Delphi}}\xspace}


\newcommand*{\chop}[1]{}
\def\honestrange{\ensuremath{\delta}\xspace}
\def\honestrangex{\ensuremath{\delta_{x}}\xspace}
\def\honestrangey{\ensuremath{\delta_{y}}\xspace}
\def\maxrange{\ensuremath{\Delta}\xspace}
\newcommand{\distanceerror}[1]{\ensuremath{\vec{d_{#1}}}\xspace}
\newcommand{\distanceerrorx}[1]{\ensuremath{\vec{d_{#1}}.x}\xspace}
\newcommand{\distanceerrory}[1]{\ensuremath{\vec{d_{#1}}.y}\xspace}
\newcommand{\iou}[1]{\ensuremath{I_{#1}}\xspace}
\newcommand{\diagonalbb}{\ensuremath{l_{\mathsf{diag}}}\xspace}
\newcommand{\propconst}{\ensuremath{}\xspace}

\newcommand{\adv}{\ensuremath{\mathcal{A}}\xspace}
\def\negl{\ensuremath{\mathsf{negl}}\xspace}
\newcommand{\secparam}{\ensuremath{\kappa}\xspace}
\def\netz{\ensuremath{\mathcal{Z}}\xspace}
\newcommand{\statparam}{\ensuremath{\lambda}\xspace}
\newcommand{\hashfunc}{\ensuremath{H}\xspace}

\newcommand{\oracleoutput}[1]{\ensuremath{o_{#1}}\xspace}
\newcommand{\honestnodes}{\ensuremath{\mathcal{P}}\xspace}
\newcommand{\honestset}{\ensuremath{V_{\mathsf{h}}}\xspace}
\newcommand{\oracleinput}[1]{\ensuremath{v_{#1}}\xspace}
\newcommand{\bitinput}[1]{\ensuremath{b_{#1}}\xspace}
\newcommand{\checkpoint}{\ensuremath{c}\xspace}
\newcommand{\closestcc}{\ensuremath{z}\xspace}
\newcommand{\midpoint}{\ensuremath{\mu}\xspace}
\newcommand{\weight}{\ensuremath{\omega}\xspace}
\newcommand{\levelmax}{\ensuremath{l_{\mathsf{M}}}\xspace}
\newcommand{\level}[1]{\ensuremath{l_{#1}}\xspace}
\newcommand{\round}[1]{\ensuremath{r_{#1}}\xspace}
\newcommand{\roundmax}{\ensuremath{r_{\mathsf{M}}}\xspace}
\newcommand{\anchor}{\ensuremath{a}\xspace}
\newcommand{\levelstart}{\ensuremath{s}\xspace}
\newcommand{\levelend}{\ensuremath{e}\xspace}
\newcommand{\levelset}{\ensuremath{\mathbb{L}}\xspace}
\newcommand{\levelstate}[1]{\ensuremath{\mathcal{L}[#1]}\xspace}
\newcommand{\intmidpoint}[1]{\ensuremath{\midpoint_{#1}}\xspace}
\newcommand{\interval}[3]{\ensuremath{\mathcal{C}_{#1}^{#3}}\xspace}
\newcommand{\intervalmidpoint}[3]{\ensuremath{\midpoint_{#1}^{#3}}\xspace}
\newcommand{\binaastate}[3]{\ensuremath{\mathcal{B}_{#1}^{#3}}\xspace}
\newcommand{\binaastateval}[3]{{\weight_{#1}^{#3}}\xspace}
\newcommand{\binaastatevalmsg}[5]{{M\langle\binaastateval{#1}{#2}{#3},#4,#5\rangle}\xspace}
\newcommand{\binaastateweight}[3]{\ensuremath{w_{#1}^{#3}}\xspace}
\newcommand{\binaastatemsg}[3]{\binaastate{#1}{#2}{#3}.{\sc{Msg}}()\xspace} 
\newcommand{\levelmsgs}[1]{\ensuremath{\mathcal{M}[#1]}\xspace}
\newcommand{\levelmsgweight}[3]{\ensuremath{\levelmsgs{#3}[\interval{#1}{#2}{#3}].w}}
\newcommand{\levelmsg}{\ensuremath{\mathcal{M}}\xspace}
\newcommand{\levelencap}{\ensuremath{\mathbb{M}}\xspace}
\newcommand{\levelval}[1]{\ensuremath{V_{#1}}\xspace}
\newcommand{\levelweight}[1]{\ensuremath{w_{#1}}\xspace}
\newcommand{\lpweight}[1]{\ensuremath{w'_{#1}}\xspace}
\newcommand{\critlevel}{\ensuremath{\phi}\xspace}
\newcommand{\adddiff}[1]{\ensuremath{c_{#1}}\xspace}
\newcommand{\intweight}[1]{\ensuremath{\weight_{#1}}\xspace}
\newcommand{\multlevel}[4]{\ensuremath{\mathbb{V}\langle#1,#2,#3,#4\rangle}}
\newcommand{\variable}{\ensuremath{\tau}\xspace}

\newcommand{\echolist}[2]{\ensuremath{E1_{#1}^{#2}}\xspace}
\newcommand{\echotwolist}[2]{\ensuremath{E2_{#1}^{#2}}\xspace}

\newcommand{\msg}[1]{\ensuremath{\langle#1\rangle}\xspace}
\newcommand{\startmsg}{\ensuremath{\langle} {\sc{Start}}\ensuremath{,i\rangle}\xspace} 
\newcommand{\newround}[1]{\ensuremath{\langle}{\sc{BinAAMsg}}\ensuremath{,#1\rangle}\xspace}
\newcommand{\levelmsgwrapper}[2]{{\sc{LevelMsg}}\ensuremath{\langle}{#1,#2}\ensuremath{\rangle}\xspace}

\newcommand{\binaainit}[1]{{\sc{BinAA}}\ensuremath{(#1)}\xspace}
\newcommand{\binaaupdate}[4]{\ensuremath{#1}.{\sc{Update}}\ensuremath{(#2,#3,#4)}\xspace}
\newcommand{\binaaterm}[2]{\ensuremath{#1}.{\sc{EndRound}}\ensuremath{(#2)}\xspace}

\newcommand{\coalesce}{{\sc{Coalesce}}\xspace}
\newcommand{\decompose}{{\sc{Split}}\xspace}
\newcommand{\expand}{{\sc{Expand}}\xspace}

\def\rand{\ensuremath{R}\xspace}
\newcommand{\randj}[1]{\ensuremath{\rand_{#1}}\xspace}
\newcommand{\beacon}[1]{\ensuremath{\langle#1,b_{#1}\rangle}\xspace}
\newcommand{\beaconelem}[1]{\ensuremath{b_{#1}}\xspace}
\newcommand{\beaconprot}{\ensuremath{\mathcal{B}}\xspace}
\newcommand{\beacprepprot}[2]{\beaconprot.{\sc{Prep}}(#1,#2)\xspace} 
\newcommand{\beacopenprot}[1]{\beaconprot.{\sc{Open}}(#1)\xspace} 
\newcommand{\beaconqueue}{\ensuremath{\mathsf{B}}\xspace}

\newcommand{\sizeof}[1]{\ensuremath{\left\vert #1 \right\vert}\xspace}
\newcommand{\set}[1]{\ensuremath{\mathopen{}\{#1\}\mathclose{}}\xspace}
\newcommand{\paren}[1]{\ensuremath{\mathopen{}\left(#1 \right)\mathclose{}}}
\newcommand{\p}[1]{\ensuremath{(#1)}\xspace}
\newcommand{\tup}[1]{\ensuremath{\left\langle #1 \right\rangle}\xspace}
\newcommand{\hypergeom}{\ensuremath{\mathcal{H}}\xspace}

\newcommand{\unbox}{\floatstyle{plain}\restylefloat{figure}}
\newcommand{\rebox}{\floatstyle{boxed}\restylefloat{figure}}
\def\scalefigure{1.0}

\iffull{}
\renewcommand{\paragraph}[1]{\medskip\noindent\textbf{#1.}\hspace{1pt}}
\else
\renewcommand{\paragraph}[1]{\noindent\textbf{#1.}}
\fi

\def\etal{et al.\xspace}

\newcommand{\rbcprot}{\ensuremath{RBC}\xspace}
\newcommand{\rbroadcast}[1]{\rbcprot.{\sc{Broadcast}}(#1)\xspace} 
\newcommand{\rbcinit}[1]{\rbcprot.{\sc{Init}}(#1)\xspace} 
\newcommand{\rdeliver}[1]{{\rbcprot.{\sc{Deliver}}(#1)}\xspace} 

\newcommand{\dealer}{\ensuremath{n_d}\xspace}
\newcommand{\nodeset}{\ensuremath{\mathcal{P}}\xspace}

\newcommand{\realnums}{\mathbb{R}\xspace}
\newcommand{\natnums}{\mathbb{N}\xspace}
\newcommand{\hmessage}[1]{\ensuremath{M_{#1}}\xspace}
\newcommand{\domain}{\ensuremath{\mathcal{D}}\xspace}
\newcommand{\beacondomain}{\ensuremath{\mathcal{D}'}\xspace}
\newcommand{\polyeval}{\ensuremath{\mathcal{E}}\xspace}
\newcommand{\bottom}{\ensuremath{\perp}\xspace}
\newcommand{\aaprecision}{\ensuremath{\epsilon'}\xspace}
\newcommand{\aatermrounds}{\ensuremath{\round_{t}}\xspace}
\newcommand{\roundvalue}[2]{\ensuremath{V_{#1}^{#2}}\xspace}
\newcommand{\roundrange}[2]{\ensuremath{\Delta_{#1}^{#2}}\xspace}
\newcommand{\highthreshold}{\ensuremath{n-t}\xspace}
\newcommand{\noderange}[2]{\ensuremath{\{#1,\hdots,#2\}}\xspace}
\newcommand{\faults}{\ensuremath{t}\xspace}
\newcommand{\minthreshold}{\ensuremath{t+1}\xspace}

\newcommand{\pk}[1]{\ensuremath{pk_{#1}}\xspace}

\newcommand{\consensusoutput}{\ensuremath{z}\xspace}
\newcommand{\consensusvariable}{\ensuremath{\mathcal{Z}}\xspace}
\newcommand{\Protocol}{\ensuremath{\mathbb{P}}\xspace}
\newcommand{\estimaterounds}[2]{\ensuremath{EstimateRounds_{#1}(#2)}\xspace}
\newcommand{\reduce}[2]{\ensuremath{Reduce(#1,#2)}\xspace}
\newcommand{\sort}[1]{\ensuremath{Sort(#1)}\xspace}
\newcommand{\trim}[2]{\ensuremath{Trim(#1,#2)}\xspace}
\newcommand{\minimum}[1]{\ensuremath{Min(#1)}\xspace}
\newcommand{\maximum}[1]{\ensuremath{Max(#1)}\xspace}
\newcommand{\median}[1]{\ensuremath{Median(#1)}\xspace}
\newcommand{\setop}[1]{\ensuremath{\{#1\}}\xspace}
\newcommand{\sizeofset}[1]{\ensuremath{|#1|}\xspace}
\newcommand{\bigO}{\ensuremath{\mathcal{O}}\xspace}
\newcommand{\smallO}[1]{\ensuremath{o}(#1)\xspace}
\newcommand{\overshoot}[1]{\ensuremath{\rho_{#1}}\xspace}
\newcommand{\exponent}{\ensuremath{2}\xspace}
\newcommand{\bvoutputset}[1]{\ensuremath{B_{#1}}\xspace}

\newcommand{\bigo}[1]{\ensuremath{\mathcal{O}(#1)}\xspace}
\newcommand\tab[1][1cm]{\hspace*{#1}}

\newcommand{\protocol}{{\sc{\name}}\xspace}
\newcommand{\smrprotocol}{{\sc{PQ-Tusk}}\xspace}
\newcommand{\beaconprotocol}{{\sc{HashPipe}}\xspace}
\newcommand{\bundledaa}{{\sc{BunAA}}\xspace}
\newcommand{\binaryaa}{{\sc{BinAA}}\xspace}
\newcommand{\dagrider}{{\sc{DAG-Rider}}\xspace}
\newcommand{\fin}{{\sc{FIN}}\xspace}
\newcommand{\tusk}{{\sc{Tusk}}\xspace}
\newcommand{\picnic}{{\sc{Picnic}}\xspace}
\newcommand{\bawss}{{\sc{BAwVSS}}\xspace}
\newcommand{\awvss}{{\sc{AwVSS}}\xspace}
\newcommand{\avss}{{\sc{AVSS}}\xspace}
\newcommand{\pavss}{{\sc{PAVSS}}\xspace}
\newcommand{\anytrust}{AnyTrust\xspace}

\newcommand{\gatherprot}{{\sc{Gather}}\xspace}
\newcommand{\gatherdef}{{Gather}\xspace}
\newcommand{\gatherstart}{\gatherprot.{\sc{Start}}\xspace}
\newcommand{\gatherterm}[1]{\gatherprot.{\sc{Term}(#1)}\xspace} 
\newcommand{\gathercoreset}{\ensuremath{G}\xspace}
\newcommand{\gatherset}[1]{\ensuremath{G_{#1}}\xspace}
\newcommand{\totalityprot}[1]{\ensuremath{\mathcal{T}_{#1}}\xspace}

\newcommand{\committee}{\ensuremath{\mathsf{N}}\xspace}
\newcommand{\batchsize}{\ensuremath{\beta}\xspace}
\newcommand{\frequency}{\ensuremath{\phi}\xspace}
\newcommand{\commelectproc}{{\sc{ComElect}}\xspace}

\newcommand{\eaggrprot}{\ensuremath{\mathcal{E}}\xspace}
\newcommand{\approxconsensus}{Approximate Agreement\xspace}
\newcommand{\aastart}[1]{\ensuremath{\eaggrprot}.{\sc{Start}}(#1)\xspace} 
\newcommand{\aaterm}[1]{\ensuremath{\eaggrprot}.{\sc{Term}}(#1)\xspace} 
\newcommand{\aastartround}[2]{\ensuremath{\eaggrprot_{#1}}.{\sc{StartRound}(#2)}\xspace} 
\newcommand{\aaendround}[2]{\ensuremath{\eaggrprot_{#1}}.{\sc{EndRound}(#2)}\xspace} 
\newcommand{\aainitvalue}[1]{\ensuremath{m_{#1}}\xspace}
\newcommand{\aadecvalue}[1]{\ensuremath{o_{#1}}\xspace}
\newcommand{\aainitvalueset}{\ensuremath{\mathcal{M}}\xspace}
\newcommand{\appxconweights}[2]{\ensuremath{w_{#1,#2}}\xspace}
\newcommand{\queueappxcon}{\ensuremath{\mathsf{E}}\xspace}
\newcommand{\appxconoutputset}[1]{\ensuremath{\mathit{O_{#1}}}\xspace}
\newcommand{\checkpointindex}{\ensuremath{k}\xspace}
\newcommand{\checkpointindices}{\ensuremath{K}\xspace}
\newcommand{\checkpointcardi}{\ensuremath{\sizeof{U}_{\mathsf{\max}}}\xspace}

\newcommand{\locationcalc}[1]{\ensuremath{\vec{L_{\mathsf{#1}}}}\xspace}
\newcommand{\locationcalcx}[1]{\ensuremath{\vec{L_{\mathsf{#1}}}.x}\xspace}
\newcommand{\locationcalcy}[1]{\ensuremath{\vec{L_{\mathsf{#1}}}.y}\xspace}
\newcommand{\droneheight}{\ensuremath{h}\xspace}

\newcommand{\bracha}{{\sc{Bracha and Toueg}}\xspace}
\newcommand{\cachinrbc}{Cachin and Tessaro\xspace}
\newcommand{\dasrbc}{Das et al.\xspace}
\newcommand{\fktprotocol}{Freitas et al.\xspace}
\newcommand{\abrahamapprox}{Abraham et al.\xspace}

\newcommand{\expectation}[1]{\ensuremath{E[#1]}\xspace}

\newcommand{\sign}[2]{\ensuremath{\mathit{Sign}_{#1}(#2)}\xspace}
\newcommand{\verify}[2]{\ensuremath{\mathit{Verf}_{#1}{#2}}\xspace}

\newcommand{\cmark}{\ding{51}}%
\newcommand{\xmark}{\ding{55}}%

\newcommand{\sendall}[1]{\textbf{SendAll}{ #1}\xspace}
\newcommand{\receive}[2]{\ensuremath{\mathit{Receive}_{#1}{(#2)}}\xspace}
\newcommand{\range}[1]{\ensuremath{\mathit{\delta(#1)}}\xspace}
 \newcommand{\defeq}[0]{\ensuremath{{\;\vcentcolon=\;}}\xspace}

\newcommand{\upon}{\ensuremath{\mathit{\textbf{upon}}}\xspace}

\newcommand{\ceil}[1]{\ensuremath{\left\lceil{#1}\right\rceil}}
\newcommand{\floor}[1]{\ensuremath{\left\lfloor{#1}\right\rfloor}}

\makeatletter
\algnewcommand{\HorizontalLine}{\Statex\hskip\ALG@thistlm{}\hrulefill}
\makeatother

\algdef{SE}[SUBALG]{Indent}{EndIndent}{}{\algorithmicend\ }%
\algtext*{Indent}
\algtext*{EndIndent}
\newcommand{\algrule}[1][.2pt]{\par\vskip.5\baselineskip\hrule height #1\par\vskip.5\baselineskip}

\newcommand{\intervalalg}[3]{{\sc{Checkpoint}(\ensuremath{#1,#2,#3})}\xspace} 
\newcommand{\levelalg}[2]{{\sc{Level}(\ensuremath{#1,#2})}\xspace} 
\newcommand{\clsmethod}[2]{{\sc{#1}}(\ensuremath{#2})\xspace} 
\newcommand{\listmethod}[1]{{\sc{List}}\ensuremath{<}{\sc{#1}}\ensuremath{>}}

\newcommand{\akhil}[1]{
    \ifdraft{
        \textcolor{red}{Akhil: {#1}}
    }\fi
}
\newcommand{\aniket}[1]{
    \ifdraft{
        \textcolor{red}{Aniket: {#1}}
    }\fi
}
\newcommand{\mike}[1]{
    \ifdraft{
        \textcolor{green}{Mike: {#1}}
    }\fi
}
\newcommand{\saurabh}[1]{
    \ifdraft{
        \textcolor{red}{Saurabh: {#1}}
    }\fi
}
\newcommand{\adithya}[1]{
    \ifdraft{
        \textcolor{blue}{Adithya: {#1}}
    }\fi
}

\newcommand{\revisionchanges}[1]{#1}

\definecolor{yescolor}{HTML}{026378}
\def\yes{\textcolor{yescolor}{\checkmark}}
\def\no{\textcolor{red}{\ding{55}}}

\ifdraft{}
\newcommand{\changed}[1]{{\color{red} #1}}
\else
\newcommand{\changed}[1]{#1}
\fi

%

\begin{abstract}


Agreement protocols are crucial in various emerging applications, spanning from distributed (blockchains) oracles to fault-tolerant cyber-physical systems. In scenarios where sensor/oracle nodes measure a common source, maintaining output within the convex range of correct inputs, known as convex validity, is imperative. Present asynchronous convex agreement protocols employ either randomization, incurring substantial computation overhead, or approximate agreement techniques, leading to high $\mathcal{\tilde{O}}(n^3)$ communication for an $n$-node system.

This paper introduces Delphi, a deterministic protocol with $\mathcal{\tilde{O}}(n^2)$ communication and minimal computation overhead. Delphi assumes that honest inputs are bounded, except with negligible probability, and integrates agreement primitives from literature with a novel weighted averaging technique. Experimental results highlight Delphi's superior performance, showcasing a significantly lower latency compared to state-of-the-art protocols. Specifically, for an $n=160$-node system, Delphi achieves an 8x and 3x improvement in latency within CPS and AWS environments, respectively.

\end{abstract}

\section{Introduction}\label{sec:intro}

In the distributed oracle problem setting, a set of $n$ nodes each with an input coming from its sensor device, aim to closely agree on a common output while tolerating a minority of Byzantine faults. 
This problem is getting increased interest recently as it captures many scenarios including fault-tolerant distributed cyber-physical systems (CPS) ~\cite{park2017fault,zhao2020blockchain,li2019eventcps,li2022byzantine,el2021collaborative}, oracle networks ~\cite{breidenbach2021chainlink,kate2023dora}, or agreement on physical sensor inputs such as the average global temperature. 
Such systems are typically deployed in compute-starved environments supported by paltry infrastructure and asynchronous networks with unpredictable delays. 
In this context, to ensure that it accurately represents the real-world variable, the output of an ideal system is close to the range of honest inputs. 
The convex agreement problem~\cite{andrei2023convexconsensus} captures this requirement by ensuring that the output is in the convex hull (or min-max range) of honest inputs.

Current solutions to the convex agreement problem can be categorized into two different classes.
In the first, protocols that use randomization using common coins~\cite{ben1983another}. 
Generally, such coins are computationally expensive and require complex cryptographic setup assumptions. 
As an example, the most efficient and popular implementation of a common coin~\cite{boneh2003aggregate} requires \bigo{n} bilinear pairing computations per coin~\cite{cachin2000constantinople,boldyreva2003threshsigs}. 
Each pairing requires $1000$ times more time and energy than symmetric key primitives used in TLS.
Furthermore, it also requires a threshold setup, which involves running an expensive Asynchronous Distributed Key Generation (ADKG) procedure amongst nodes~\cite{kokoris2019adkg,das2022practical}. 

In the second, protocols use asynchronous approximate agreement (AAA) ~\cite{dolev1986reaching}. 
These AAA protocols are deterministic and, therefore, avoid the expensive computational cost of generating common coins. 
The state-of-the-art approximate agreement protocols~\cite{abraham2004approxagreement} incurs high communication costs of \bigo{n^3} bits per round and take \bigo{\log(\frac{\honestrange}{\epsilon})} rounds to achieve outputs that are $\epsilon$-close to each other when the range of honest inputs is $\honestrange$.

Towards addressing these inefficiencies, we make a key observation regarding input from honest parties for the oracle problems. 
In the CPS settings, the ambient noise causing the difference between inputs follows distributions like Normal and Lognormal~\cite{xiao2005scheme,xiao2007distributed}. Even prices of stocks and cryptocurrencies are often modeled using distributions like Pareto and Loggamma, where samples are mostly concentrated within a small distance~\cite{kate2023dora}.
As a result, given that honest parties often measure inputs that are close to each other,
we assume that honest inputs of size $\ell$ are sampled from a \emph{thin-tail} distribution.

Building on the above observations and the corresponding assumption, 
we present a protocol that is efficient in both computation and communication.
%
%
We introduce \name{}, a signature-free deterministic protocol with \bigo{\ell n^2\log(\statparam{}\log{n})} communication, where $\statparam{}$ is a statistical security parameter. 
\name's outputs are $\epsilon$-close and are at most within a distance $\delta = O(\epsilon)$ from the convex range of honest inputs. 
See \cref{tab:relwork_combined} for a comparison to prior works. 


\begin{table*}
	\footnotesize
	\renewcommand{\thefootnote}{\fnsymbol{footnote}}
	\def\footnoteCC{\footnotemark[2]}
	\def\footnoteValidity{\footnotemark[4]}
	\def\footnoteDelphiBID{\footnotemark[5]}
	\def\footnoteAppxAggr{\footnotemark[6]}
	\def\footnoteDora{\footnotemark[1]}
	\def\footnoteFin{\footnotemark[3]}
	\centering
	\caption[Related work comparison]{\small Comparison of relevant Asynchronous Convex BA protocols
	}%
	\label{tab:relwork_combined}
	\newcolumntype{s}{>{\centering\hsize=.25\hsize}X}
	\newcolumntype{Y}{>{\centering\arraybackslash}X}
	\newcolumntype{Z}{>{\hsize=1.2\hsize}X}
	\newcolumntype{d}{>{\centering\hsize=.5\hsize}X}
	\newcolumntype{k}{>{\centering\hsize=1.2\hsize}X}
	\newcolumntype{M}{>{\centering\arraybackslash\hsize=1.9\hsize}X}
	\centering
\begin{tabularx}{\linewidth}[!ht]{Z M k d d d k Y}
		\toprule
		\multirow{2}{*}{\textbf{Protocol}}
        &
		\textbf{Communication}
		&
		\multirow{2}{*}{\textbf{Rounds}}
		&
		\multicolumn{2}{c}{\textbf{Computation}}
		&
		\textbf{Agreement}
		&
		\multirow{2}{*}{\textbf{Validity}}
		&
		\multirow{2}{*}{\textbf{Setup}}
		\\
        \cline{4-5}
		&
		\revisionchanges{\textbf{Complexity (in bits)}}
		&
		
		&
		\multicolumn{1}{c}{\textbf{Sign}}
		&
		\multicolumn{1}{c}{\textbf{Verf}}
		&
		\textbf{Distance}
		&
		
		&
		
		\\
		\midrule
		HoneyBadgerBFT~\cite{miller2016honey}
		&
		\bigo{ln^3}
		&
		\bigo{\log(n)}
		&
		\bigo{n}
		&
		\bigo{n^2}
		&
		0
		&
		$[m,M]$
		&
		DKG
		\\
		Dumbo2~\cite{guo2020dumbo}
		&
		\bigo{ln^2+\secparam{}n^3}
		&
		\bigo{1}
		&
		\bigo{n}
		&
		\bigo{n^2}
		&
		0
		&
		$[m,M]$
		&
		HT-DKG
		\\
		FIN~\cite{duan2023sigfreeacs}
		&
		\bigo{ln^2+\secparam{} n^3}
		&
		\bigo{1}
		&
		\bigo{\log(n)}
		&
		\bigo{n\log(n)}
		&
		0
		&
		$[m,M]$
		&
		DKG
		\\
		WaterBear~\cite{zhang2022waterbear}
		&
		\bigo{ln^3 + exp(n)}
		&
		\bigo{exp(n)}
		&
		0
		&
		0
		&
		0
		&
		$[m,M]$
		&
		Auth.\ channels
		\\
		Abraham \etal~\cite{abraham2004approxagreement}\footnoteAppxAggr{}
		&
		\bigo{ln^3\log(\frac{\honestrange}{\epsilon})+n^4}
		&
		\bigo{\log(\frac{\honestrange}{\epsilon})}
		&
		0
		&
		0
		&
		$\epsilon$
		&
		$[m,M]$
		&
		Auth.\ channels
		\\
		\midrule
		\multirow{2}{*}{\textbf{\protocol}\footnoteDelphiBID\footnoteAppxAggr{}}
		&
		\bigo{ln^2\frac{\honestrange}{\epsilon}(\log(\frac{\honestrange}{\epsilon}\log{\frac{\honestrange}{\epsilon}})+\log(\statparam{}\log{n}))}
		&
		\bigo{\log(\frac{\honestrange}{\epsilon}\log{\frac{\honestrange}{\epsilon}})+\log(\statparam{}\log{n})}
		&
		\multirow{2}{*}{0}
		&
		\multirow{2}{*}{0}
		&
		\multirow{2}{*}{$\epsilon$}
		&
		\multirow{2}{*}{$[m-\honestrange,M+\honestrange]$}
		&
		\multirow{2}{*}{Auth.\ channels}
		\\
		\bottomrule
		\end{tabularx}
\begin{flushleft}
	{\small
		 \honestset{} is the set of all honest inputs, $l$ is the size of each input, $m = \min(\honestset), M=\max(\honestset)$, $\honestrange = M-m$ is the range of honest inputs,$\epsilon$ is the range of outputs, and $\secparam{}$ is the cryptographic security parameter. $l< \secparam{}$ in practice.\ \textbf{Validity}: The protocol's output \oracleoutput{} is guaranteed to be within this range.\ \footnoteAppxAggr{} \textbf{Agreement Distance} These protocols are Approximate Agreement protocols, where honest outputs have a range at most $\epsilon$, where $\vert\oracleoutput{i}-\oracleoutput{j}\vert\leq\epsilon$.\ \footnoteDelphiBID{}\textbf{Probability distribution} \protocol{} requires the input range $\honestrange\leq\maxrange$, where \maxrange is an upper bound. \protocol{} utilizes the fact that honest inputs come from a distribution, and sets $\maxrange = f(n,\statparam{})$ such that $\honestrange\leq\maxrange$ with probability negligible in \statparam{}.\ We report \protocol{}'s communication complexity in the case where honest inputs are from a Normal distribution, where $f(n,\statparam{}) = \bigo{\statparam{}\log{n}}$.\ For Lognormal and Pareto distributions, $f(n,\statparam{}) = \bigo{\statparam{}n}, \bigo{2^{\statparam{}}n}$, where \protocol{} increases as \bigo{ln^2\log{\statparam{}n}} and \bigo{l\statparam{}n^2\log{\statparam{}n}}~\cite{embrechts2013modelling}.
	}
\end{flushleft}
\end{table*}

\subsection{Solution Overview}

Our starting point is a protocol \binaryaa{} that achieves \approxconsensus{} with binary inputs, i.e. either $0$ or $1$. 
In this case, we observe that the parties output values within $\epsilon$ distance with \bigo{n^2 \log{\frac{1}{\epsilon}}} communication.

Next, we divide the real input space into intervals of size $\overshoot{}$. 
Then, for each interval, each honest node $i$ participates in \binaryaa{} with input $1$ if its input \oracleinput{i} is within \overshoot{}-distance to the interval, or $0$ otherwise. 
Note that if all honest inputs are within \overshoot{} distance, there exists an interval where all honest nodes output $1$ (since all honest nodes input $1$ to that interval). 
Moreover, all intervals that are further than distance \overshoot{} to any honest input will have output $0$. 

Next, the nodes perform a weighted average among the intervals, where each interval has a representative value called \textit{checkpoint} (e.g. the midpoint value of the interval), and the output of \binaryaa{} for each corresponding interval is the weight. 
We observe that only intervals that are \overshoot{}-close to an honest input can have a non-zero output, and the overall average remains \overshoot{}-close to the honest range. 
Moreover, the weights among honest nodes remain $\epsilon$-close, ensuring that the weighted averages are also close.

This technique fails when honest nodes' inputs are further than \overshoot{} distance, in which case all weights are $0$. 
We address this by running our technique over multiple levels, each with increasing values of \overshoot{l}. 
The starting level $l=0$ has $\overshoot{0}=\epsilon$, and higher levels $l$ have $\overshoot{l} = 2^l\overshoot{0}$. 
We then introduce a novel multi-level weighted average scheme and design levels and their corresponding \overshoot{l} values based on the input distribution so that the averages are close when the honest inputs are close.
Finally, we ensure that levels with high \overshoot{l} values do not contribute to the average 
, while ensuring that our protocol incurs $\mathcal{\tilde{O}}(n^2)$ bits of communication per round.

\paragraph{Data Analysis and Performance Evaluation}
We evaluate \protocol{} in two applications: (a) A network of oracle nodes reporting the price of Bitcoin, and (b) A distributed system of surveillance Drones detecting and locating unauthorized vehicles in an area using an object detection program. 
We analyze the nodes' inputs in each application by collecting data over a prolonged period and configure \protocol{} based on this analysis. We indeed observe that measurements can be best explained using loggamma and gamma distributions in the oracle network and object detection settings, both of which are indeed thin-tailed distributions. 
We also conduct experiments in two testbeds: (a) A geo-distributed testbed on AWS for the oracle network application and (b) A distributed CPS testbed comprised of Raspberry Pi devices for Drone-based object detection. 
For $n=160$, \protocol{} takes $\frac{1}{3}$rd and $\frac{1}{8}$th the time taken by FIN~\cite{duan2023sigfreeacs}, the State-of-the-art Asynchronous Convex BA protocol, and $\frac{1}{6}$th and $\frac{1}{8}$th the time taken by Abraham \etal~\cite{abraham2004approxagreement}, the best AAA protocol, in the AWS and CPS testbed, respectively. 
We also analyze the practical implications of the Validity relaxation of \protocol{}.

\section{Preliminaries}
\subsection{System Model}
We assume a system of $n$ nodes $\honestnodes{}:= \set{1,\ldots,n}$ connected by pairwise authenticated channels and an asynchronous network.
We consider an adaptive adversary \adv, who can corrupt $\faults<\frac{n}{3}$ nodes at any time during the execution of the protocol.
\adv{} also controls the network between honest nodes and can arbitrarily delay and reorder messages but cannot drop them.
We consider a node honest if it is never faulty. 

Nodes in the system measure a physical state variable \variable and aim to agree on a value representative of \variable. A few examples of such state variables are the price of a cryptocurrency like Bitcoin, the physical location of objects like cars, and the temperature in a given area. Each node measures \variable using an on-board data source, which provides the node with a value \oracleinput{i}, an estimate of \variable. Such data sources include the price feeds provided by currency exchanges, cameras indicating the position of an object, and sensors sensing temperature. These sources have been known to have a finite accuracy and hence measure \variable with an accuracy error, conventionally modeled using a probability distribution. In line with this error, we assume node $i$'s input \oracleinput{i} is independently sampled from a random variable $X$, with a probability distribution $p(x)$. Nodes and data sources do not know any details about $p$. We model the inputs \oracleinput{i} as floating point numbers with a finite precision. We also assume $\oracleinput{i}\in[\levelstart,\levelend]$, where \levelstart{} and \levelend{} are very small and large finite numbers, respectively, defined at a system level.

\subsection{Problem Definition}
We define the Approximate Agreement for oracles.
\begin{definition}%
  \label{def:ba_definition}
  We denote the set of honest nodes' input values by \honestset. A protocol $\Pi_{\mathsf{AA}}$ for $n$ nodes where node $i$ inputs \oracleinput{i} and outputs \oracleoutput{i} solves \approxconsensus with $\rho$-relaxed validity if it satisfies the following properties. 
  \begin{enumerate}
	\item \textbf{Termination}: Each honest node must eventually produce an output \oracleoutput{i}.
	\item \textbf{\overshoot{}-relaxed Min-Max Validity}:  Each honest node's output must be within the \overshoot{}-relaxed interval formed by honest inputs. Let the interval formed by honest inputs is $[m,M]$, where $m = \min(\honestset)$ and $M = \max(\honestset)$. For every honest node $i\in\honestnodes$:
	$ m-\overshoot{} \le \oracleoutput{i}  \le M+\overshoot{} $
	\item \textbf{$\epsilon$-agreement}: The outputs of any pair of honest nodes $i$ and $j$ are within $\epsilon$ of each other, i.e.,
	$\left\vert\aadecvalue{i} - \aadecvalue{j}\right\vert < \epsilon$. $\epsilon$ is called the agreement distance. 
  \end{enumerate}
\end{definition}

\subsection{Approximate Agreement for Binary Inputs}
Our starting building block \binaryaa{} is an Approximate Agreement protocol with convex validity (i.e., $0$-relaxed validity) for binary inputs (every honest node has input $0$ or $1$), and communicating \bigo{n^2} bits per round, similar to the notion of binary Proxcensus by Ghinea, Goyal and Liu-Zhang \cite{ghinea2022round}.

Our protocol builds upon a weaker variant of the Binary Value broadcast primitive defined in Mostefaoui, Moumen, and Raynal~\cite{mostefaoui2015signature}. This primitive can be instantiated with (a straightforward adaptation of the) Crusader Agreement protocol by Abraham, Ben-David, and Yandamuri~\cite{abraham2022bca}, with \bigo{n^2} bits per round, and terminating in three rounds.
\begin{definition}\label{def:wbv_broadcast}
	A protocol $\Pi_{\mathsf{BV}}$ for $n$ nodes where node $i$ inputs a value \oracleinput{i} and outputs a set of values \bvoutputset{i} achieves weak Binary Value broadcast if it satisfies the following properties.
	\begin{itemize}
		\item \textbf{Termination}: All honest nodes eventually terminate and output a non-empty set \bvoutputset{i}, where $\sizeof{\bvoutputset{i}}\geq 1$.
		
		\item \textbf{Justification}: If the output set \bvoutputset{i} of an honest node contains value $v$, then $v$ must have been the input of at least one honest node. 
		
		\item \textbf{Weak Uniformity}: The output sets of any pair of honest nodes $i$ and $j$ must have a non-empty intersection. 
		\[ \bvoutputset{i}\cap\bvoutputset{j}\neq\emptyset \]
	\end{itemize}
\end{definition}
\begin{algorithm}
	\small
	\caption{\revisionchanges{Binary Approximate Agreement (\binaryaa) protocol}}
	\begin{algorithmic}[1]
		\State{} \revisionchanges{\textbf{INPUT}: \nodeset, $\bitinput{i}\in\{0,1\}$, $\epsilon$}
		\HorizontalLine{}
		\Statex{}{\color{myorange}\(\triangleright\) \textrm{Parameter Setup}}
		\LineComment{\roundmax is the number of rounds to run and \ensuremath{r} is the current round}
		\State{} \revisionchanges{$\roundmax \gets \log_2(\frac{1}{\epsilon}); r = 1$}
		\LineComment{\echolist{}{} and \echotwolist{}{} stand for list of ECHO1 and ECHO2 messages received, \bvoutputset{i} is the output set for weak BV broadcast}
		\State{} \revisionchanges{$\echolist{i}{}[r]\gets\{\}; \echotwolist{i}{}[r]\gets\{\};\bvoutputset{i}[r]\gets\{\}$ for $r\in\{1,2,\hdots,\roundmax\}$}
		\HorizontalLine{}
		\Statex{}{\color{myorange}\(\triangleright\) \textrm{Begin protocol}}
		\On{\revisionchanges{receiving \msg{\text{Start,ID.i},r=1}}}\label{algline:new_round_binaa}
		\State{}\revisionchanges{\bitinput{i,1}$\gets\bitinput{1}$}
		\State{}\revisionchanges{$\echolist{i}{}[r]\gets\echolist{i}{}[r]\cup\{(\bitinput{i,1},i)\}$}
		\State{}\revisionchanges{\sendall{\msg{\text{ECHO1},\bitinput{i},r}}}\label{algline:round_init} \Comment{Send ECHO1 message}
		\EndOn{}
		\On{\revisionchanges{receiving \msg{\text{ECHO1},\bitinput{j,r},r} from node $j$}}
		\State{}\revisionchanges{$\echolist{i}{}[r]\gets\echolist{i}{}[r]\cup\{(\bitinput{j,r},j)\}$}
		\EndOn{}
		\On{\revisionchanges{receiving ECHO1s from \minthreshold nodes for a value $b$ in current round $r$ i.e. there exists $b$ such that at least \minthreshold $(b,k): k\in\nodeset$ values are in $\echolist{i}{}[r]$}}
		\State{}\revisionchanges{\sendall{\msg{\text{ECHO1},b,r}}} \Comment{Bracha amplify value $b$}\label{algline:bin_aa:amplify}
		\EndOn{}
		\On{\revisionchanges{receiving ECHO1s from \highthreshold nodes for a value $b$ and not previously sending an ECHO2 message in current round $r$}}
		\State{}\revisionchanges{$\echotwolist{i}{}[r]\gets\echotwolist{i}{}[r]\cup\{(b,i)\}$}
		\State{}\revisionchanges{\sendall{\msg{\text{ECHO2},b,r}}}
		\EndOn{}
		\State{}\revisionchanges{\textbf{wait until} one of the following conditions is true for round $r$:}
		\Indent
		\State{}\revisionchanges{(1) Receiving ECHO1s from \highthreshold nodes for two values $\bitinput{1},\bitinput{2}$ i.e. $\exists \bitinput{1},\bitinput{2}$ such that \highthreshold $(\bitinput{1},k): k\in\nodeset$ values and \highthreshold $(\bitinput{2},k): k\in\nodeset$ values are in the set $\echolist{i}{}[r]$}\label{algline:bin_aa_condition_1}
		\State{}\revisionchanges{(2) Receiving ECHO2s from \highthreshold nodes for a value $\bitinput{}$ i.e. $\exists \bitinput{}$ such that \highthreshold $(b,k): k\in\nodeset$ values are in the set $\echotwolist{i}{}[r]$}\label{algline:bin_aa_condition_2}
		\EndIndent
		\On{\revisionchanges{condition (1) being true}}
		\State{}\revisionchanges{$\bvoutputset{i}[r]\gets\{\bitinput{1},\bitinput{2}\}$}\Comment{End here for Weak BV Broadcast}
		\State{}\revisionchanges{$\bitinput{i,r+1}\gets\frac{\bitinput{1}+\bitinput{2}}{2}$}
		\State{}\revisionchanges{$r\gets r+1$ and \textbf{goto} $\cref{algline:new_round_binaa}$ to begin new round}
		\EndOn{}
		\On{\revisionchanges{condition (2) being true}}
		\State{}\revisionchanges{$\bvoutputset{i}[r]\gets\{b\}$}\Comment{End here for Weak BV Broadcast}
		\State{}\revisionchanges{$\bitinput{i,r+1}\gets b$}
		\State{}\revisionchanges{$r\gets r+1$ and \textbf{goto} $\cref{algline:new_round_binaa}$ to begin new round}
		\EndOn{}
		\State{}\revisionchanges{\textbf{output} the value \bitinput{i,\roundmax} after terminating \roundmax rounds} 
	\end{algorithmic}
	\label{alg:bin_aa}
\end{algorithm}

\revisionchanges{We describe the \binaryaa{} protocol in~\cref{alg:bin_aa} and give a brief intuition of its functioning here. \binaryaa{} proceeds in iterations, where in each iteration $r$, each node $i$ participates in an instance of a BV broadcast protocol $\Pi_{\mathsf{BV},r}$ with an input $z_i$ (in the initial iteration, $z_i =  \oracleinput{i}\in\{0,1\}$). Each iteration achieves the weak BV-broadcast primitive defined in~\cref{def:wbv_broadcast}.} 

\revisionchanges{We first show that the first iteration of the protocol satisfies all three properties specified in~\cref{def:wbv_broadcast}. First, the protocol satisfies Termination because honest inputs are binary, and at least one value $b$ would have been possessed by \minthreshold honest nodes. This ensures that $b$ will be echoed by enough honest nodes and eventually enable honest nodes to terminate. Second, the protocol satisfies Justification because faulty nodes can produce at most \faults ECHO1/ECHO2 messages on a value not possessed by honest nodes. Finally, the protocol satisfies weak uniformity because of two main reasons - (a) Each honest node sends at most one ECHO2 message, which implies at most one value $b$ can receive \highthreshold ECHO2s. This ensures that no pair of honest nodes satisfy condition 2 (\cref{algline:bin_aa_condition_2}) for two different values $\bitinput{1}\neq\bitinput{2}$. (b) A node that terminates this iteration by satisfying the first condition (\cref{algline:bin_aa_condition_1}) must have at least one value in common with every other honest node. This is because only values $0,1$ can receive \highthreshold ECHO1s.}

\revisionchanges{Second, at the end of each iteration, each node $i$ updates its value as the average of values in the output set $\bvoutputset{i,r}$. Given the set of honest inputs is $\{\oracleinput{0},\oracleinput{1}\}$, the set of honest output sets from $\Pi_{\mathsf{BV}}$ is either $\{\{\oracleinput{0}\}\},\{\{\oracleinput{0}\},\{\oracleinput{0},\oracleinput{1}\}\}$, $\{\{\oracleinput{1}\},\{\oracleinput{0},\oracleinput{1}\}\}$, or $\{\{\oracleinput{1}\}\}$. The range of honest inputs decreases by at least $\frac{1}{2}$ at each iteration. Further, the updated values in each iteration satisfy the binary input assumption. When run for $\log{\frac{1}{\epsilon}}$ iterations, this protocol achieves approximate agreement with communication complexity of \bigo{n^2\log^2{\frac{1}{\epsilon}}} bits.}

This communication can be reduced to \bigo{n^2\log(\frac{1}{\epsilon})\log\log(\frac{1}{\epsilon})} bits using a small modification. In the beginning of each round $r>1$, a node broadcasts a new type of message $\langle VAL,2L/L/C/R/2R, r\rangle$ (replacing the ECHO1 message on \cref{algline:round_init}), where $L/C/R$ signify whether the node's state value \bitinput{i,r} moved to the left by one or two spaces ($L,2L$), stayed at \bitinput{i,r-1} ($C$), or moved to the right by one or two spaces ($R,2R$). When a node $j$ receives a message $\langle VAL,L,r\rangle$ from node $i$, it waits for all $\langle VAL,.,r_i\rangle$ messages from rounds $r_i\in\{1,\hdots,r\}$. Based on this sequence of messages, $j$ deduces $i$'s state value $\bitinput{i,r}$. Then, it counts $i$'s $\langle VAL\rangle$ message as an ECHO1 message for value \bitinput{i,r}. Further, while amplifying other values using ECHO1 and ECHO2 messages, nodes use this technique to denote the value they are echoing. This technique of waiting for messages from prior rounds has been described in Abraham \etal~\cite{abraham2004approxagreement} as FIFO-broadcast. The FIFO broadcast primitive delivers messages in the order that they were broadcast by the sender. 

We give a brief rationale about why this technique works. We know that in each round, the range of honest state values either reduces by a $\frac{1}{2}$ fraction or collapses to $0$. Further, honest state values after each round are binary. Let $S_r = \{\bitinput{r,0},\bitinput{r,1}\}$ be the set of honest state values in round $r$. Then, the set of values in round $r+1$ becomes one of the following four sets: $S_{r+1} = \{\bitinput{r,0}\},\{\bitinput{r,0},\frac{\bitinput{r,0}+\bitinput{r,1}}{2}\},\{\frac{\bitinput{r,0}+\bitinput{r,1}}{2},\bitinput{r,1}\},\{\bitinput{r,1}\}$. We denote an honest node with value $\bitinput{r,1}$ in round $r$ updating its state to $\frac{\bitinput{r,0}+\bitinput{r,1}}{2}$ (or shifting towards the left by $\frac{1}{2^r}$) in round $r+1$ using the term $L$. Similarly, we denote an honest node updating its state from $\bitinput{r,1}$ to $\bitinput{r,0}$ (or shifting towards the left by $\frac{1}{2^{r-1}}$) using the term $2L$. We use similar notation for a node shifting its state value towards the right. Therefore, using a sequence of such state shifts, a node $j$ can correctly calculate node $i$'s state value in round $r$. 

The communication complexity of this updated protocol is \bigo{n^2\log(\frac{1}{\epsilon})\log\log(\frac{1}{\epsilon})}. The $\log\log(\frac{1}{\epsilon})$ factor is due to the inclusion of the round number in each message. 

\section{Design}

\rebox{}
\begin{figure}
\small
	\begin{flushleft}
		\begin{itemize}[wide,labelindent=0pt]
			\item \textbf{Real range \honestrange} is the real difference between maximum and minimum honest inputs $\honestrange = \max(\honestset)-\min(\honestset)$.
			\item \textbf{Max range \maxrange} is the maximum possible difference between maximum and minimum honest inputs $\honestrange\leq\maxrange$.
			\item \textbf{Separator \overshoot{l}} is the length of each interval or difference between adjacent checkpoints in a level $l$.
			\item \textbf{Checkpoint \intervalmidpoint{k}{y}{l}} or \intervalalg{\checkpointindex}{y}{\level{}} is the $k$th multiple of the separator \overshoot{l}: $\intervalmidpoint{k}{y}{l} = k\overshoot{l}$, where $k$ is an integer between $k\in[\frac{\levelstart}{\overshoot{l}},\frac{\levelend}{\overshoot{l}}]$. Level $l$ has checkpoints throughout the space of honest inputs $[\levelstart,\levelend]$, where adjacent checkpoints are separated by $\overshoot{l} = \exponent^l\overshoot{0}$.
			\item \textbf{Weight $\binaastateval{\checkpointindex}{y}{l}$} is the weight of checkpoint \intervalmidpoint{\checkpointindex}{}{l}. Nodes achieve \approxconsensus{} on these values using \binaryaa{} protocol.
			\item \textbf{Level \levelalg{\level}{\overshoot{}}} or \levelstate{l} is the level object. It contains a list of intervals \interval{x}{y}{l} at level $l$ between endpoints \levelstart{} and \levelend. $\overshoot{l} = \exponent^l\overshoot{0}$ is the distance between adjacent checkpoints.
		\end{itemize}
	\end{flushleft}
	\caption{Description of symbols used in \protocol}%
	\label{tab:symbol_desc}
\end{figure}
\unbox{}


In this section, we present the design of our protocol \protocol. We first provide context, explain the limitations of prior solutions, and then discuss our approach.

\subsection{Limitations of prior protocols}
Prior \approxconsensus protocols like Dolev \etal~\cite{dolev1986reaching} and Abraham \etal~\cite{abraham2004approxagreement} proceed in a round-based manner, where in each round, every node collects a set of values from other nodes, and updates its state using a Trimmed mean of this set. In these protocols, reaching \approxconsensus requires each node to collect $2\faults+1$ values in common with other honest nodes. This is for two major reasons. (a) Even with \faults faulty values, each honest node must ensure that its updated state is within the range of honest inputs, and (b) Even after trimming \faults greatest and smallest values in their sets, honest nodes must have enough values in common to reduce their range. 
Dolev \etal~\cite{dolev1986reaching} achieve this condition using multicasts with $n=5\faults+1$, where a faulty node can make different honest nodes accept different values. 
Achieving this at $n=3\faults+1$ resilience requires restricting equivocation by faulty nodes, which requires using the Reliable Broadcast (RBC) primitive. RBC has a lower bound of \bigo{n^2} bits, which results in an overall \bigo{n^3} communication complexity. Therefore, we observe that this requirement of collecting $2\faults+1$ values in common with other nodes, which is necessary for achieving strict convex-hull Validity, is the main reason for the \bigo{n^3} communication complexity of all prior \approxconsensus protocols. 

\subsection{Our Approach}
We overcome this \bigo{n^3} communication bottleneck by proposing \protocol{}, an \approxconsensus protocol that does not require RBC. \protocol{} uses a novel approach based on checkpoints and Binary \approxconsensus, which achieves communication efficiency by trading off a relaxation in the Validity condition, specified in~\cref{def:ba_definition}. 

\paragraph{Notation} We divide the space of possible honest inputs between $[\levelstart,\levelend]$ among checkpoints separated by a system parameter \overshoot{}. A checkpoint $\intervalmidpoint{\checkpointindex}{}{} = \checkpointindex\overshoot{}$ represents the space of values between $[\checkpointindex\overshoot{}-\frac{\overshoot{}}{2},k\overshoot{}+\frac{\overshoot{}}{2}]$, where $k$ is the set of all integers in $(\dfrac{\levelstart}{\overshoot{}},\dfrac{\levelend}{\overshoot{}})$.
Nodes run a \binaryaa{} protocol instance for all checkpoints in $[\levelstart,\levelend]$ to approximately agree on a \textit{representative weight} \binaastateweight{\checkpointindex}{y}{l} for each checkpoint.
We denote the corresponding pseudocode in~\cref{alg:n_2_BA}.
Each level, indexed by the number $l$, is characterized by the difference between two consecutive checkpoints \overshoot{l}. Further, each level comprises checkpoints in $[\levelstart,\levelend]$, where each checkpoint is a multiple of \overshoot{l}. 

For ease of understanding, we first describe a basic, single-level version of \protocol{}, explain the challenges, and then explain the multi-level version that addresses these challenges. 

\subsubsection{Single level \approxconsensus} We first describe a basic version of \protocol{} at a single level with distance between checkpoints \overshoot{}. Nodes run \binaryaa{} for all checkpoints in a level. 
An honest node inputs $v=1$ to checkpoint $\checkpointindex\overshoot{}$ only if $|\oracleinput{i}-\checkpointindex\overshoot{}|\leq\overshoot{}$, and inputs 0 otherwise (\cref{algline:n_2_BA:initiate_binaa} in~\cref{alg:n_2_BA}). Nodes then send out messages for checkpoints and handle incoming messages. Each node then checks whether a round $r$ has terminated for a checkpoint. Upon terminating round $r$ of \binaryaa{} of all checkpoints in the level, nodes initiate round $r+1$. The nodes terminate a \binaryaa{} instance after running for \roundmax rounds (specified later). Upon terminating all \binaryaa{} instances, nodes calculate a weighted average of checkpoints $\oracleoutput{i} = \dfrac{\sum_{k}(\checkpointindex\overshoot{})\times\binaastateweight{k}{}{}}{\sum_{k}\binaastateweight{k}{}{}}$ (\cref{algline:lev_wavg} in~\cref{alg:n_2_BA})\@.

\paragraph{Intuition} We observe that when honest inputs are clustered within a distance $\honestrange{}\leq\overshoot{}$, at least one checkpoint will have weight $\binaastateweight{\checkpointindex}{}{}=1$. This is because the Validity property of \binaryaa ensures that if all honest nodes input the same value to any \binaryaa{} instance \binaastate{\checkpointindex}{j+1}{}, then all nodes must output that value. This property gives way to two key observations: (a) If all honest inputs \oracleinput{i} are within \overshoot{} distance, the weight $\binaastateweight{\checkpointindex}{j+1}{}=1$ for at least one checkpoint \intervalmidpoint{\checkpointindex}{j+1}{}, and (b) No checkpoint $\intervalmidpoint{\checkpointindex}{k+1}{}:|\intervalmidpoint{\checkpointindex}{k+1}{}-\oracleinput{i}|>\overshoot{} \forall i\in\honestset$ will have a non-zero weight. The first observation guarantees approximate agreement of the weighted average, whereas the second observation enables us to show that the final output will strictly lie between the \overshoot{}-relaxed interval of honest inputs $[\min(\honestset)-\overshoot{},\max(\honestset)+\overshoot{}]$.

\paragraph{Challenges} The described approach produces a valid result only when honest nodes' inputs are within $\honestrange\leq\overshoot{}$ distance. When $\honestrange>\overshoot{}$, the weights of all checkpoints can be zero, resulting in a division by zero. Moreover, estimating the range \honestrange{} in a distributed manner also costs a whopping \bigo{n^4} complexity in prior protocols~\cite{abraham2004approxagreement}. We explain this challenge pictorially in \cref{fig:singlelevel}. 

We assume an upper bound \maxrange{} on the range \honestrange, and utilize it to ensure that nodes always output a well-defined weighted average. Prior works like Chakka \etal~\cite{kate2023dora} also utilized this upperbound assumption to achieve BA efficiently, and practically justified this assumption~\cite{kate2023dora}. Moreover, as many applications are modeled using thin-tailed probability distributions, this assumption can be replaced with a statistical security parameter \statparam{}. In~\cref{sec:analysis}, we show that $\maxrange=\bigo{\statparam{}\honestrange_\mathsf{mean}}$ is sufficient to ensure that $\honestrange\leq\maxrange$ with probability $1-\negl(\statparam{})$. We can statically set $\overshoot{} = \maxrange$ to ensure the honest inputs are clustered within the \overshoot{} range, where $\honestrange\leq\overshoot{}$ even in the worst case.

However, this \overshoot{} causes an abnormally high Validity relaxation of \maxrange, even in the average case with much lesser \honestrange. For example, as we study in~\cref{sec:eval}, in the case of nodes agreeing on the price of a cryptocurrency, honest nodes' inputs can differ by hundreds of dollars in situations of drastic volatility, which prompts us to assume $\maxrange\sim 100\$ $, and set $\overshoot{}\sim 100\$$ to guarantee termination. This $\overshoot{}$ value causes a high Validity relaxation of $100\$ $, even in the average case when honest nodes input values close to the ground truth and close to each other with $\honestrange\sim 10\$ $. This high Validity relaxation generates inaccurate results, which misrepresent the ground truth.
\begin{figure}
	\centering
	\includegraphics[width=0.75\linewidth]{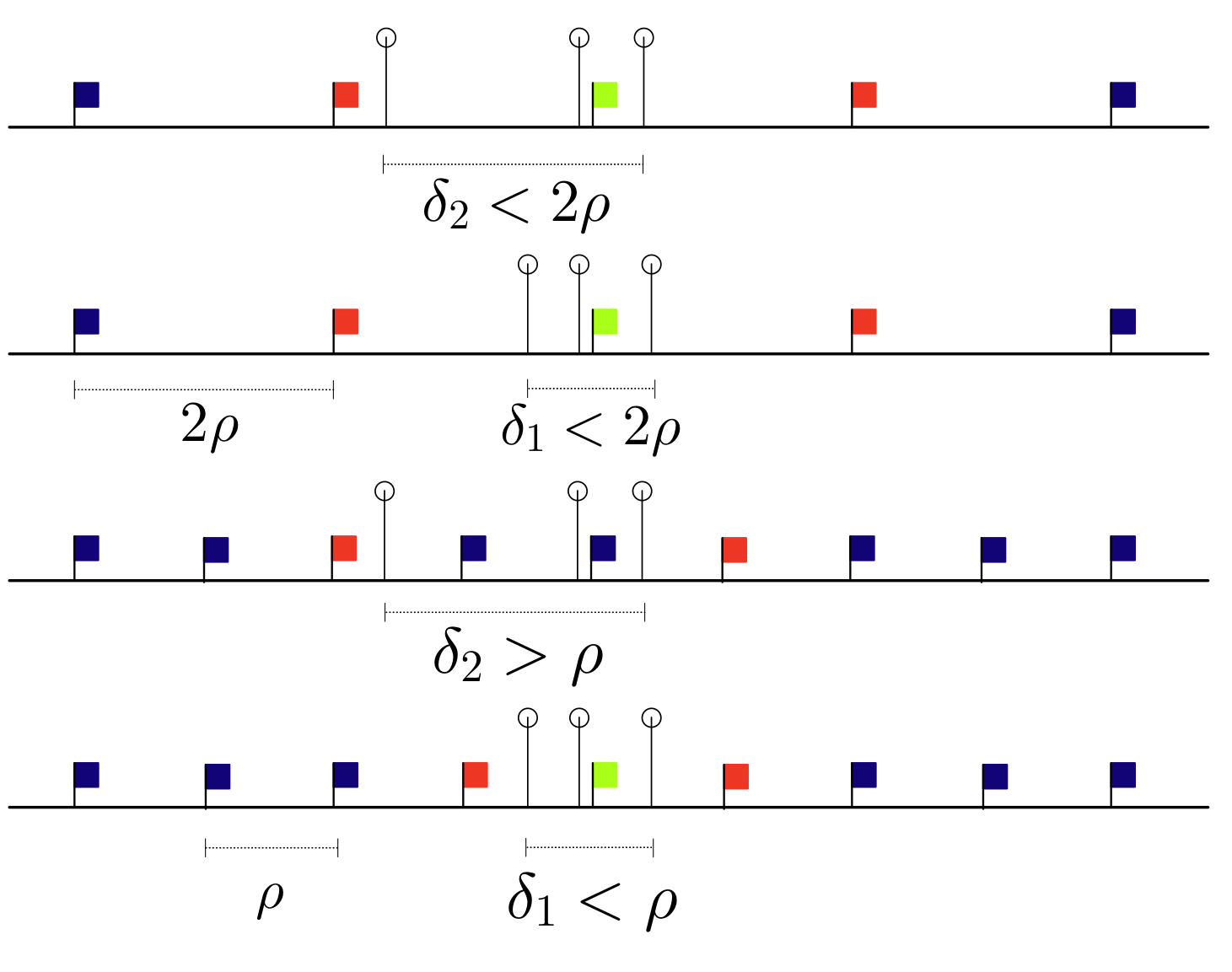}
	\caption{\small\textbf{\protocol{} with one level}: The flags denote the checkpoints, and antennae denote honest inputs. Green checkpoints have weight $\binaastateweight{\checkpointindex}{}{}=1$ and drive agreement among nodes, and red checkpoints are outside the range of honest inputs that can have a non-zero weight and contribute to Validity relaxation. When $\honestrange_2>\overshoot{}$, no green checkpoint exists, which results in agreement failure. A higher $2\overshoot{}$ enables nodes to reach agreement with range $\honestrange_2$, but also adds the weight of farther red checkpoints when the range is $\honestrange_1$, which affects Validity negatively.}
	\label{fig:singlelevel}
\end{figure}

\begin{algorithm}
	\small
	\caption{\protocol{} protocol}
		\begin{algorithmic}[1]
			\State{}\textbf{INPUT}: $\oracleinput{i}\in[\levelstart,\levelend]$,\overshoot{0},$\maxrange$,$\epsilon$
			\HorizontalLine{}
			\Statex{}{\color{myorange}\(\triangleright\) \textrm{Setup}}
			\LineComment{Level and round parameters}
			\State{}$\levelmax \gets \log_\exponent(\dfrac{\maxrange}{\overshoot{0}}); \aaprecision \gets \dfrac{\epsilon}{4\maxrange\levelmax n};\roundmax \gets \log_2(\dfrac{1}{\aaprecision})$
			\For{\revisionchanges{$l$ in $\{0,1,\hdots,\levelmax\}$}}\label{algline:n_2_BA:init_levels}
				\State{}\revisionchanges{\textbf{Define }$\overshoot{l} = 2^l\overshoot{0}$}
				\For{\revisionchanges{$k$ in $\{i\in \mathbb{Z}: \lceil\dfrac{\levelstart}{\overshoot{l}}\rceil\leq i\leq \lfloor\dfrac{\levelend}{\overshoot{l}}\rfloor\}$}}
				\LineComment{Initialize Binary AA instances for all checkpoints}
				\State{} \revisionchanges{$\binaastate{k}{}{l}\gets$\binaryaa()}
				\EndFor{}
			\EndFor{}
			\Statex{}{\color{myorange}\(\triangleright\) \textrm{\binaryaa phase}}
			\On{receiving \msg{Start,ID.i}}
			\State{}$\levelencap\gets\{\}$
			\For{$l$ in $\{0,1,\hdots,\levelmax\}$}
			\LineComment{Checkpoints separated by \overshoot{l}}
			\State{}$\intervalmidpoint{\closestcc-1}{}{l},\intervalmidpoint{\closestcc}{c+1}{l}\gets$Two closest checkpoints to input \oracleinput{i} at lev $l$
			\State{}\revisionchanges{\textbf{Start} \binaryaa instances \binaastate{\closestcc-1}{}{l} and \binaastate{\closestcc}{}{l} with input $\bitinput{i,\closestcc-1}^l=1, \bitinput{i,\closestcc}^l=1$ and all other instances with input $\bitinput{i}=0$.} \label{algline:n_2_BA:initiate_binaa}
			\EndFor{}
			\State{}\revisionchanges{\textbf{Bundle} messages of all \binaryaa instances together and invoke \textbf{SendAll} on the bundled message}
			\EndOn{}
			\Statex{}{\color{myorange}\(\triangleright\) \textrm{Aggregation phase}}
			\On{\revisionchanges{terminating \roundmax rounds for all \binaryaa instances at all levels:}}
			\LineComment{1. Aggregate weights of checkpoints at each level. Each level has a representative value \levelval{l} and a weight \levelweight{l}.}
			\For{$l$ in $\{0,1,\hdots,\levelmax\}$}
			\If{\revisionchanges{$\exists$ a \binaryaa instance \binaastate{\checkpointindex}{}{l} with output $>0$}}
			\State{}\revisionchanges{\binaastateweight{\checkpointindex}{y}{l}$\gets \binaastate{k}{}{l}.$\clsmethod{Output}{}}
			\State{}\revisionchanges{\intervalmidpoint{\checkpointindex}{}{l}$\gets \checkpointindex\overshoot{l}$}
			\State{}\begin{multline*}
				\revisionchanges{(\levelval{l},\levelweight{l})\defeq \left(\begin{aligned}
					\dfrac{\sum_{\lceil\frac{\levelstart}{\overshoot{l}}\rceil\leq k\leq\lfloor\frac{\levelend}{\overshoot{l}}\rfloor}\binaastateweight{\checkpointindex}{y}{l}\intervalmidpoint{\checkpointindex}{y}{l}}{\sum_{\lceil\frac{\levelstart}{\overshoot{l}}\rceil\leq k\leq\lfloor\frac{\levelend}{\overshoot{l}}\rfloor}\binaastateweight{\checkpointindex}{y}{l}},
					\max_{\lceil\frac{\levelstart}{\overshoot{l}}\rceil\leq k\leq\lfloor\frac{\levelend}{\overshoot{l}}\rfloor}{\binaastateweight{\checkpointindex}{y}{l}}
				\end{aligned}\right)}
			\end{multline*}\label{algline:lev_wavg}
			\Else{}
			\LineComment{As weighted average is undefined when all weights are 0, we handle this case by assigning a custom weight for level $l$}
			\State{}\revisionchanges{$(\levelval{l},\levelweight{l}):=(\oracleinput{i},\aaprecision)$}
			\EndIf{}
			\EndFor{}
			
			\LineComment{2. Cross-level Aggregation: Aggregate weights across levels}
			\State{}\revisionchanges{$\lpweight{0}\gets\levelweight{0}^2$\label{algline:lp_weight}}
			\For{\revisionchanges{$l$ in $\{1,2,\hdots,\levelmax\}$}}
			\State{}\revisionchanges{$\lpweight{l}\gets\levelweight{l}\times\sizeof{\levelweight{l}-\levelweight{l-1}}$}
			\EndFor{}
			\State{}\revisionchanges{\oracleoutput{i} \defeq{} $\dfrac{\sum_{l=0}^{\levelmax}(\lpweight{l}\times\levelval{l})}{\sum_{l=0}^{\levelmax}\lpweight{l}}$}\label{algline:level_wavg}
			\State{}\revisionchanges{\textbf{output } \oracleoutput{i}}
			\EndOn{}
		\end{algorithmic}%
	\label{alg:n_2_BA}
\end{algorithm}


\begin{figure}[!th]
	\centering
	\includegraphics[width=0.75\linewidth]{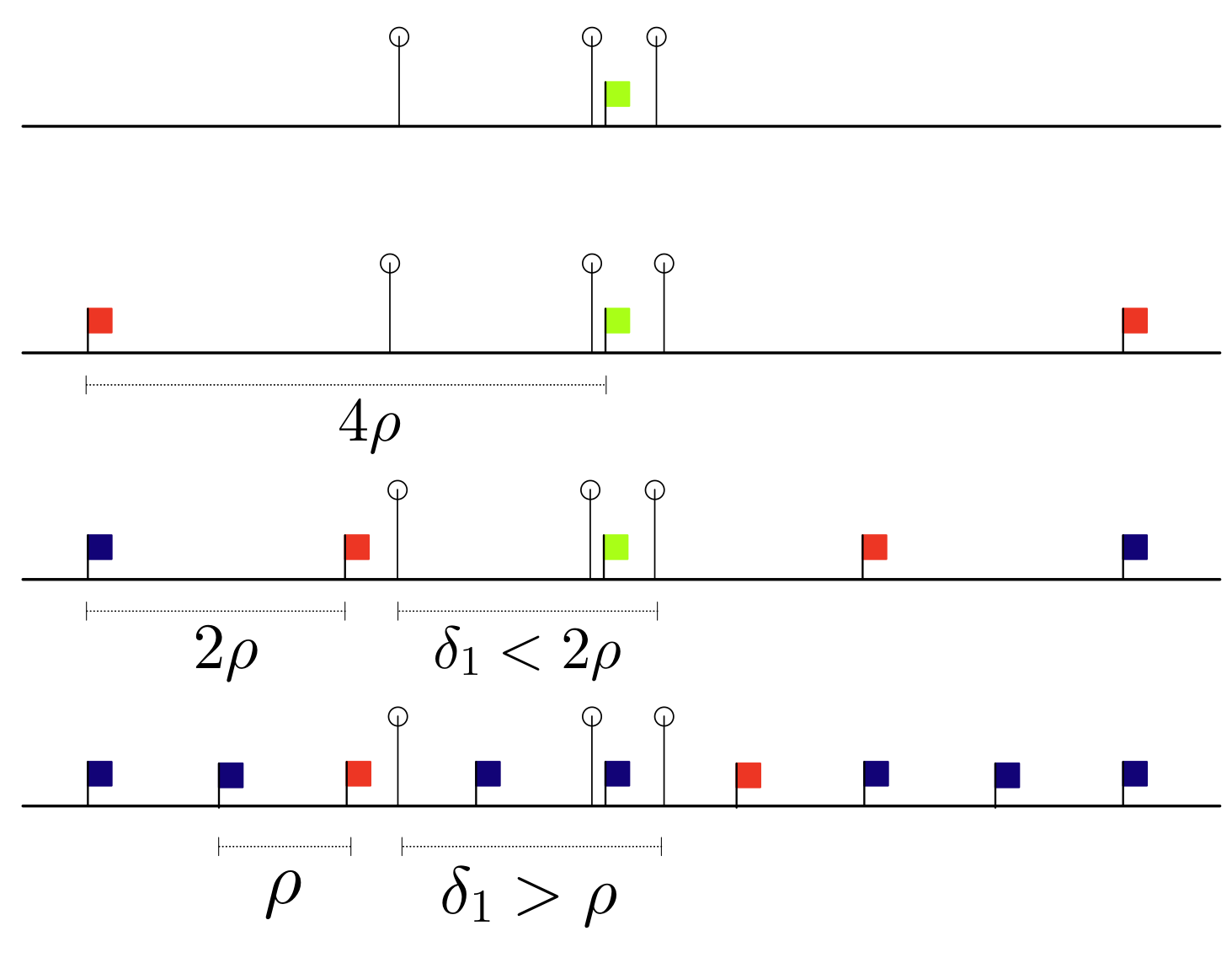}
	\caption{\small\textbf{\protocol{} with multiple levels}: As $\honestrange{}<2\overshoot{}$, all levels $l\geq2$ have at least one green flag, which implies $\levelweight{l}=\max_{k}\binaastateweight{\checkpointindex}{l}{}$ is 1 for levels 2,3,4. We eliminate contributions of levels 3 and 4 by using weight $\lpweight{l}=\levelweight{l}|\levelweight{l}-\levelweight{l-1}|$ in the weighted average.}
	\label{fig:multlevel}
\end{figure}

\subsubsection{Multi-level Approximate Agreement} 
We address this dependence of Validity relaxation on the maximum range \maxrange{} by running \binaryaa{} across checkpoints over multiple levels. We describe this protocol, \protocol{}, in~\cref{alg:n_2_BA}. We define a level $\level{}$ where checkpoints are separated by the level-specific separator distance $\overshoot{l} = \exponent^{l}\overshoot{0}$ (\cref{algline:n_2_BA:init_levels} in~\cref{alg:n_2_BA}).\
\overshoot{0} is the separator at the starting level $0$. Each node inputs 1 to instance \binaastate{\checkpointindex}{j+1}{l} iff its input \oracleinput{i} is within distance $\exponent^l\overshoot{0}$ from the checkpoint $\intervalmidpoint{\checkpointindex}{j+1}{l}$ (\cref{algline:n_2_BA:init_levels} in~\cref{alg:n_2_BA}). Upon terminating \roundmax{} rounds of \binaryaa{} instances, we calculate the weighted average of checkpoints using the agreed-upon weights (\cref{algline:lev_wavg} in~\cref{alg:n_2_BA}).

\paragraph{Intuition} We give a pictorial description of this approach in \cref{fig:multlevel}. Since we know that $\honestrange\leq\maxrange$, there must exist a level $\critlevel\leq\levelmax$ such that every level $\geq \critlevel$ must have at least one value $\intervalmidpoint{\checkpointindex}{j+1}{l}$ with weight $\binaastateval{\checkpointindex}{j+1}{l}=1$. In cases with relatively small \honestrange, the level \critlevel{} is small. As the checkpoints at higher levels contribute a larger Validity relaxation to the output, we aim to minimize their contribution to the weighted average. We utilize the fact that the maximum weight of a checkpoint in a level $\max{\binaastateweight{\checkpointindex}{}{l}} = 1$ for any level $l\geq\critlevel$. We set the weight of a level $\lpweight{l} = \levelweight{l}|\levelweight{l}-\levelweight{l-1}|$, where \levelweight{l} is the maximum weight of a checkpoint in level $l$ (\cref{algline:lp_weight} in~\cref{alg:n_2_BA}). This operation is analogous to a differentiation. Any checkpoint at level $l>\critlevel$ will not contribute to the output.

\subsection{Optimizing Communication}\label{subsec:optimize_comm}
\protocol{} requires running the Binary Approximate Agreement protocol for all intervals between \levelstart{} and \levelend{}. However, separately sending messages for each \binaryaa{} instance would require $(\dfrac{\levelend-\levelstart}{\overshoot{0}})n^2$ messages per round, which is infeasible in practice. We bundle messages from multiple \binaryaa{} instances and treat them as a single \binaryaa{} instance. For example, we consider a level $l$ with honest inputs between checkpoints $\checkpointindex_1,\checkpointindex_2$. All honest nodes will input 0 to checkpoints from $[\levelstart,\checkpointindex_1]$ and $[\checkpointindex_2,\levelend]$. This implies all these checkpoints require only \bigo{n^2} communication per round. Further, for cases when $\checkpointindex_2-\checkpointindex_1 = \bigo{c}$, the total communication per level is $\bigo{n^2}$ bits per round. We also perform cross-level optimization. If all honest inputs are within two consecutive checkpoints \intervalmidpoint{\checkpointindex}{}{\critlevel} and \intervalmidpoint{\checkpointindex+1}{}{\critlevel} at level \critlevel, the \binaryaa{} data for levels \critlevel{} to \levelmax{} can be represented with constant bits. 
\section{Analysis}\label{sec:analysis}
\baselineskip=12.1pt
\revisionchanges{\paragraph{Overview} We know that nodes approximately agree on the weight of each checkpoint, which implies each node’s weights are $\aaprecision$-close for all checkpoints at all levels. The weighted average function enables nodes to combine the weights and checkpoints by offering two key properties: (a) It forces the final output to be within checkpoints with non-zero weight outputs, and (b) As the weights are $\aaprecision$-close, the outputs from the function are $\epsilon$-close, which guarantees approximate agreement with Validity. However, this approach also has a caveat: it becomes undefined when the sum of weights is zero. Hence, we first ensure that the denominator in our weighted average is always greater than $\frac{1}{2}$. Next, we utilize the properties offered by the weighted average scheme to prove approximate agreement and Validity of \protocol.}

\subsection{Termination}
We discuss \protocol's termination. We show that the weighted average in~\cref{algline:level_wavg} in~\cref{alg:n_2_BA} always has a non-zero sum of weights.

\begin{theorem}\label{thm:liveness}
	\textbf{Termination}: Given that the \binaryaa{} protocol provides Safety, Termination, and Validity, every honest node terminates \protocol{} and outputs a finite, defined value.
\end{theorem}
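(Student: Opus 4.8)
The plan is to establish two facts: that no honest node ever blocks at a wait statement, and that the value \oracleoutput{i} returned in \cref{algline:level_wavg} of \cref{alg:n_2_BA} is a quotient with a strictly positive denominator and a finite numerator, so that it is well defined and finite (indeed bounded). For the first fact, I would first note that the parameters \levelmax{}, \aaprecision{}, \roundmax{} fixed in the setup are all finite and positive, and that each level $l\in\{0,\ldots,\levelmax\}$ carries only finitely many checkpoints, namely the integers $k$ with $\lceil\levelstart/\overshoot{l}\rceil\le k\le\lfloor\levelend/\overshoot{l}\rfloor$; hence \protocol{} launches only finitely many \binaryaa{} instances. By the assumed Termination property of \binaryaa{}, each instance completes all \roundmax{} of its rounds, and since the bundled instances advance their round counters together and there are finitely many of them, the guard ``terminating \roundmax{} rounds for all \binaryaa{} instances at all levels'' eventually fires at every honest node. (One should check here that bundling instances and synchronising their rounds does not destroy the per-round liveness of \binaryaa{}; this is immediate, since a node only needs messages of the bundled super-instance to advance a round.)

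Next I would show that every level produces a finite, bounded representative value \levelval{l} together with a strictly positive weight \levelweight{l}, by a case split on the branch taken for level $l$. If some \binaryaa{} instance at level $l$ outputs a positive value, then, using that all \binaryaa{} outputs lie in $[0,1]$, we get $\sum_{k}\binaastateweight{\checkpointindex}{y}{l}\ge\max_{k}\binaastateweight{\checkpointindex}{y}{l}=\levelweight{l}>0$, so the ratio defining \levelval{l} in \cref{algline:lev_wavg} has a nonzero finite denominator and a finite numerator; moreover it is a convex combination of the checkpoints $k\overshoot{l}$, each of which lies in $[\levelstart,\levelend]$, so \levelval{l} is bounded and $\levelweight{l}\in(0,1]$. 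Otherwise the protocol sets $(\levelval{l},\levelweight{l})\defeq(\oracleinput{i},\aaprecision)$, which is again finite and bounded with $\levelweight{l}=\aaprecision>0$. In both cases $\levelweight{l}>0$ and \levelval{l} is finite and bounded.

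Then I would turn to the cross-level average. All cross-level weights are nonnegative: $\lpweight{0}=\levelweight{0}^2\ge0$ and $\lpweight{l}=\levelweight{l}\,\sizeof{\levelweight{l}-\levelweight{l-1}}\ge0$ for $l\ge1$. The crucial observation is that the terms $\lpweight{l}$ with $l\ge1$ may legitimately vanish (whenever two consecutive level weights coincide), so the strict positivity of the denominator $\sum_{l=0}^{\levelmax}\lpweight{l}$ must be obtained from level $0$ alone: by the previous step $\levelweight{0}>0$, hence $\lpweight{0}=\levelweight{0}^2>0$ and $\sum_{l=0}^{\levelmax}\lpweight{l}\ge\lpweight{0}>0$. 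Therefore \oracleoutput{i} is a well-defined quotient, and being a convex combination of the finite bounded values $\levelval{0},\ldots,\levelval{\levelmax}$ it is finite. Combined with the first fact, every honest node terminates \protocol{} with a finite, defined output.

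The main obstacle --- and essentially the entire content of the theorem --- is the non-vanishing of these two denominators. For the per-level average it is engineered by the explicit default branch installing the positive weight \aaprecision{} whenever all checkpoint weights are $0$; for the cross-level average it is engineered by \emph{squaring} \levelweight{0} rather than forming a difference, so the proof must isolate \lpweight{0} rather than attempt to lower-bound $\sum_{l}\lpweight{l}$ via the $l\ge1$ terms. Everything else --- finiteness of the parameters, boundedness of the \levelval{l}'s, and termination of the finitely many \binaryaa{} instances --- is routine bookkeeping that follows directly from the assumed Safety, Termination, and Validity of \binaryaa{}.
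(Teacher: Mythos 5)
Your argument is correct for the theorem as stated, but it takes a genuinely different route from the paper's, and it is worth comparing the two because they yield different quantitative conclusions. The paper does not isolate the $l=0$ term at all; instead it takes the full denominator $S=\levelweight{0}^2+\sum_{l\ge1}\levelweight{l}\lvert\levelweight{l}-\levelweight{l-1}\rvert$, applies $\levelweight{l}\lvert\levelweight{l}-\levelweight{l-1}\rvert\ge\levelweight{l}^2-\levelweight{l}\levelweight{l-1}$, and performs a telescoping regrouping to obtain
\[
S\ge\tfrac12\Bigl(\levelweight{0}^2+\levelweight{\levelmax}^2+\sum_{l=1}^{\levelmax}(\levelweight{l}-\levelweight{l-1})^2\Bigr)\ge\tfrac12\,\levelweight{\levelmax}^2 .
\]
It then invokes the standing assumption $\honestrange\le\maxrange$ together with the Validity of \binaryaa{} to conclude $\levelweight{\levelmax}=1$, hence $S\ge\tfrac12$. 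Your approach discards the $l\ge1$ terms outright and lower-bounds $S$ by $\lpweight{0}=\levelweight{0}^2$, which is positive purely by inspection of the algorithm (either $\levelweight{0}$ is the maximum over checkpoints of a strictly positive \binaryaa{} output, or the default branch sets it to $\aaprecision>0$). This is simpler and, notably, does not use the range assumption $\honestrange\le\maxrange$ at all, so it proves well-definedness even in the tail event the paper assumes away. The trade-off is quantitative: your argument only guarantees $S\ge\aaprecision^2=\bigl(\tfrac{\epsilon}{4\maxrange\levelmax n}\bigr)^2$, whereas the paper's chain-sum yields the constant lower bound $S\ge\tfrac12$, which is exactly what the subsequent Agreement proof cites when it writes ``From Theorem \ref{thm:liveness}, we know that $\sum_{l=0}^{\levelmax}\lpweight{l,j}\ge\frac12$.'' In other words, your proof certifies the theorem's literal claim more cleanly, but if it were substituted verbatim into the paper the Agreement argument would need to be re-derived with the weaker denominator bound (and the round count $\roundmax$ re-tuned), so the paper's more elaborate argument is not dispensable.
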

\begin{proof}
	We know that every honest node terminates all \binaryaa{} instances corresponding to all checkpoints in levels $l = \{0,\hdots,\levelmax\}$. Consider the denominator in~\cref{algline:level_wavg} in~\cref{alg:n_2_BA}. Every honest node starts \binaryaa{} for all intervals in every level until \levelmax. From the Termination property of \binaryaa{}, every honest node terminates \roundmax{} rounds for all checkpoints until level \levelmax{}. 
	\begin{gather*}
		S = \levelweight{0}^2 + \sum_{l=1}^{\levelmax} \levelweight{l}\times|\levelweight{l}-\levelweight{l-1}|\\
		\implies S \geq \frac{1}{2}(2\levelweight{0}^2+\sum_{l=1}^{\levelmax}2\levelweight{l}^2 -2\levelweight{l}\levelweight{l-1})\\
		\implies S \geq \frac{1}{2}(\levelweight{0}^2+\levelweight{\levelmax}^2+\sum_{l=1}^{\levelmax}{(\levelweight{l}-\levelweight{l-1})}^2)\geq \frac{\levelweight{\levelmax}^2}{2}
	\end{gather*}
    As honest inputs have a maximum difference $\honestrange\leq\maxrange$, there must exist level $l\leq\levelmax$ such that 
     $\levelweight{l}=1$, which ensures that the sum of weights is $\geq\frac{1}{2}$. 
\end{proof}
\subsection{Validity}
The Validity offered by \protocol{} depends on two factors: (i) The separator at level $0$ \overshoot{0}, and (ii) The rate of increase of interval length across levels. These factors decide the number of levels in \protocol{} and therefore influence the Validity. 

\begin{lemma}\label{lem:level_weight_bound}
	Given that the \binaryaa{} protocol satisfies Agreement, Termination, and Validity, an honest node $i$'s representative weight for level $l$ \lpweight{l,i} in~\cref{algline:lp_weight} in~\cref{alg:n_2_BA} is at most $5\aaprecision$ away from $\lpweight{l,j}\forall l\in\{0,\hdots,\levelmax\}\forall j \in \honestnodes$. Further, for all levels $l>\lceil\log(\frac{\honestrange}{\overshoot{0}})\rceil$, the weight $\lpweight{l,i}=0 \forall i\in\honestnodes$. 
	\begin{gather*}
		|\lpweight{l,i}-\lpweight{l,j}| \leq 5\aaprecision, \forall l\in\{0,\dots,\lceil\log(\frac{\honestrange}{\overshoot{0}})\rceil\}\forall \{i,j\}\in\honestnodes\\
		\lpweight{l,i} = 0 \forall l>\lceil\log(\frac{\honestrange}{\overshoot{0}})\rceil, \forall i \in \honestnodes
	\end{gather*}
\end{lemma}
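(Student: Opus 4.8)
The plan is to prove the two claims of Lemma~\ref{lem:level_weight_bound} separately: first the $5\aaprecision$-closeness of the level weights $\lpweight{l,i}$ across honest nodes, and then the vanishing of $\lpweight{l,i}$ for all levels above $\lceil\log(\honestrange/\overshoot{0})\rceil$. For the closeness claim I would start one level down, at the checkpoint weights. By the Agreement property of \binaryaa{}, for every checkpoint $k$ at every level $l$ the outputs $\binaastateweight{k}{}{l}$ of any two honest nodes differ by at most $\aaprecision$; since every node that sees a positive output at level $l$ sets $\levelweight{l,i} = \max_k \binaastateweight{k}{}{l}$, and the max of coordinate-wise $\aaprecision$-close vectors is itself $\aaprecision$-close, we get $|\levelweight{l,i}-\levelweight{l,j}|\le\aaprecision$ whenever both nodes are in the "positive" branch. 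I must also handle the mixed case where one node takes the $\Else$ branch and assigns $\levelweight{l,i}=\aaprecision$: here I would argue that if some honest node's max checkpoint output is $0$ then, by Agreement, every honest node's max checkpoint output is $\le\aaprecision$, so in all cases $\levelweight{l,i}\in[0,\text{something}]$ and $|\levelweight{l,i}-\levelweight{l,j}|\le\aaprecision$ (using $\aaprecision\le 1$, and that the custom value is itself $\aaprecision$, so it is $\aaprecision$-close to $0$). Also $\levelweight{l,i}\le 1$ always, since \binaryaa{} outputs lie in $[0,1]$ by its Validity.

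Next I would propagate this to $\lpweight{l,i}$. For $l\ge 1$ we have $\lpweight{l,i}=\levelweight{l,i}\cdot|\levelweight{l,i}-\levelweight{l-1,i}|$. Writing $a=\levelweight{l,i}$, $a'=\levelweight{l,j}$, $b=\levelweight{l-1,i}$, $b'=\levelweight{l-1,j}$ with $|a-a'|\le\aaprecision$, $|b-b'|\le\aaprecision$ and all four values in $[0,1]$, a triangle-inequality estimate gives
\begin{align*}
|\lpweight{l,i}-\lpweight{l,j}|
&= |a|a-b| - a'|a'-b'||\\
&\le |a-a'|\cdot|a-b| + |a'|\cdot\big||a-b|-|a'-b'|\big|\\
&\le \aaprecision\cdot 1 + 1\cdot(|a-a'|+|b-b'|)\le 3\aaprecision.
\end{align*}
For $l=0$, $\lpweight{0,i}=\levelweight{0,i}^2$, and $|a^2-a'^2|=|a-a'|(a+a')\le 2\aaprecision$. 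So in fact $3\aaprecision$ suffices for $l\ge1$ and $2\aaprecision$ for $l=0$; the stated bound $5\aaprecision$ is comfortably met. (The slack presumably anticipates the floating-point precision of the checkpoint values or a looser constant in the Agreement guarantee; I would just state $5\aaprecision$ and note the estimate above proves something stronger.)

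For the second claim, fix $l>\lceil\log(\honestrange/\overshoot{0})\rceil$, so $\overshoot{l}=2^l\overshoot{0}>\honestrange$. I would show that for such $l$ \emph{every} honest node, for \emph{every} checkpoint $k$ at level $l$, inputs the same bit to \binaastate{k}{}{l}: since the honest inputs all lie in an interval of length $\honestrange<\overshoot{l}$, a checkpoint $k\overshoot{l}$ is within $\overshoot{l}$ of one honest input iff it is within $\overshoot{l}$ of all of them (distances differ by at most $\honestrange<\overshoot{l}$ — here I should double-check the constant: the condition in Algorithm~\ref{alg:n_2_BA} is $|\oracleinput{i}-k\overshoot{l}|\le\overshoot{l}$, and two honest inputs differ by $<\overshoot{l}$, so the "within" status can differ only for checkpoints whose distance to the honest range is essentially $0$, which needs the two-closest-checkpoints rule to rule out). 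Actually the cleaner route: by the two-closest-checkpoints input rule every honest node inputs $1$ to exactly the checkpoints adjacent to its own input; when $\overshoot{l}>\honestrange$ all honest inputs fall between the same pair of consecutive checkpoints $\intervalmidpoint{\closestcc}{}{l},\intervalmidpoint{\closestcc+1}{}{l}$, so all honest nodes input $1$ to these two and $0$ to all others. By \binaryaa{}'s Validity, every honest node then outputs $1$ for checkpoints $\closestcc,\closestcc+1$ and $0$ for all others, at \emph{every} level $l'\ge l$. Hence $\levelweight{l',i}=\max_k\binaastateweight{k}{}{l'}=1$ for all $l'\ge l$ and all honest $i$, so $|\levelweight{l',i}-\levelweight{l'-1,i}|=0$ for all $l'>l$, i.e. $\lpweight{l',i}=0$ for all $l'>l$. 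The boundary level $l$ itself: at $l=\lceil\log(\honestrange/\overshoot{0})\rceil$ we do not claim $\lpweight{l,i}=0$, consistent with the statement, and the argument only needs $l'>l$.

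The main obstacle I anticipate is the careful bookkeeping of the mixed $\Else$-branch case in the first claim — making sure that when honest nodes disagree about whether \emph{any} checkpoint at level $l$ has positive weight, the Agreement property of \binaryaa{} still forces $\levelweight{l,i}$ and $\levelweight{l,j}$ to be $\aaprecision$-close (the subtlety is that a node in the $\Else$ branch plugs in the arbitrary value $\aaprecision$, not $0$, so I need $\aaprecision\le$ the neighbour's max output or handle it via $\aaprecision$-closeness to $0$). Everything else is routine triangle-inequality arithmetic and a clean case analysis on the checkpoint-input rule.
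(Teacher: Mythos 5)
Your proof follows essentially the same two-step structure as the paper's: use \binaryaa{}'s Agreement to get $\aaprecision$-closeness of the level maxima $\levelweight{l,\cdot}$, expand the product $\levelweight{l,\cdot}\cdot|\levelweight{l,\cdot}-\levelweight{l-1,\cdot}|$ to bound $|\lpweight{l,i}-\lpweight{l,j}|$, and then use \binaryaa{}'s Validity to force $\levelweight{l,i}=1$ for $l\geq\critlevel$ and hence $\lpweight{l,i}=0$ for $l>\critlevel$. Two remarks. First, where you differ is mostly a credit: the paper substitutes worst-case offsets $(\levelweight{l,j}+\aaprecision)(|\levelweight{l,j}-\levelweight{l-1,j}|+2\aaprecision)$ and crudely bounds the crossterms, landing on $5\aaprecision$ after using $\aaprecision\leq 1$; your triangle-inequality decomposition $a|a-b|-a'|a'-b'|=(a-a')|a-b|+a'(|a-b|-|a'-b'|)$ gives the tighter $3\aaprecision$. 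You also explicitly handle the \emph{Else}-branch mixed case (node $i$ sees all-zero checkpoint outputs and substitutes $\aaprecision$, while node $j$ has a positive max), correctly arguing Agreement caps $j$'s max at $\aaprecision$ so the two level weights stay $\aaprecision$-close; the paper's proof silently assumes away this case. Second, one small imprecision in your argument for the second claim: when $\overshoot{l}>\honestrange$ it is \emph{not} true that all honest inputs fall between the same pair of consecutive checkpoints—they may straddle a single checkpoint $c\overshoot{l}$, so some honest nodes input $1$ to $\{c-1,c\}$ and others to $\{c,c+1\}$. What does hold, and is all you need, is that there is at least one checkpoint (the one just above $\min(\honestset)$) to which \emph{every} honest node inputs $1$: since $\max(\honestset)-\min(\honestset)=\honestrange\leq\overshoot{l}$, that checkpoint is a bracketing checkpoint for every honest input. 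Validity then gives that checkpoint output $1$ at every honest node, so $\levelweight{l,i}=1$ for all $l\geq\critlevel$, which is exactly the conclusion the paper's proof reaches; the rest of your argument ($\lpweight{l,i}=\levelweight{l,i}\cdot|\levelweight{l,i}-\levelweight{l-1,i}|=1\cdot 0=0$ for $l>\critlevel$) goes through unchanged.
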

\begin{proof}
	The weight \levelweight{l,i} in \lpweight{l,i} is calculated as \levelweight{l,i} = $\max\{\binaastateval{\checkpointindex}{y}{l}: \binaastate{x}{y}{l}$. 
    From agreement of \binaryaa{}, we know $|\levelweight{l,i}-\levelweight{l,j}|\leq\aaprecision \forall \{i,j\}\in\honestset$, and $|\levelweight{l,i}-\levelweight{l-1,i}|-|\levelweight{l,j}-\levelweight{l-1,j}| \leq 2\aaprecision$. . 
	
     
	\begin{multline*}
		\lpweight{l,i}-\lpweight{l,j} \leq (\levelweight{l,j}+\aaprecision)\times(|\levelweight{l,j}\\-\levelweight{l-1,j}|+2\aaprecision)-\levelweight{l,j}|\levelweight{l,j}-\levelweight{l-1,j}|
	\end{multline*}
	\begin{gather*}
		\lpweight{l,i}-\lpweight{l,j} \leq \aaprecision(2\levelweight{l,j}+|\levelweight{l,j}-\levelweight{l-1,j}|)+2\epsilon^{'2}\\
		\implies \lpweight{l,i}-\lpweight{l,j} \leq 5\aaprecision
	\end{gather*}
	
	For level $\critlevel = \lceil\log(\frac{\honestrange}{\overshoot{0}})\rceil$, the separation between checkpoints $\overshoot{l}\geq\honestrange$, implying $\levelweight{l}=1 \forall l\geq\critlevel, \lpweight{l} = 0 \forall l\geq\critlevel$. This implies there exists at least one checkpoint \intervalmidpoint{\checkpointindex}{k+1}{\critlevel} that is within a \overshoot{l} distance from all honest inputs \oracleinput{i}. From the Validity of \binaryaa, $\binaastateval{\checkpointindex}{k+1}{\critlevel}=1 \forall i\in\honestnodes, \implies \levelweight{\critlevel,i} = 1$. This statement is also true for all levels $>\critlevel$. Therefore, the weight $\lpweight{l,i} =0 \forall l>\critlevel$.
\end{proof}
\begin{theorem}\label{thm:int_validity}
	Given that the \binaryaa{} protocol satisfies Termination, and Validity, honest nodes' output values \oracleoutput{i} from \protocol{} are always within the following interval; i.e.,
		$\oracleoutput{i} \in [\min(\honestset)-\max(\overshoot{0},\honestrange),
		\max(\honestset)+\max(\overshoot{0},\honestrange)]$
\end{theorem}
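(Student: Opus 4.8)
The output $\oracleoutput{i}$ computed on \cref{algline:level_wavg} of \cref{alg:n_2_BA} is the ratio $\sum_{l=0}^{\levelmax}\lpweight{l}\levelval{l}\,/\,\sum_{l=0}^{\levelmax}\lpweight{l}$ with every $\lpweight{l}\ge 0$, and by the argument in the proof of \cref{thm:liveness} the denominator is at least $\frac12$. Hence $\oracleoutput{i}$ is a genuine convex combination of the level values $\{\levelval{l}:\lpweight{l}>0\}$, and since a convex combination of points of an interval stays in that interval, it suffices to show that every contributing $\levelval{l}$ lies in $[\min(\honestset)-\max(\overshoot{0},\honestrange),\ \max(\honestset)+\max(\overshoot{0},\honestrange)]$. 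I would do this in two steps: first bound a single $\levelval{l}$ in terms of its separator $\overshoot{l}$, then bound $\overshoot{l}$ for the levels that can actually carry positive weight.

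\emph{Step 1 (a fixed level).} Fix $l$ with $\lpweight{l}>0$. If every \binaryaa{} instance at level $l$ output $0$, then \cref{alg:n_2_BA} sets $\levelval{l}=\oracleinput{i}$, which is an honest input and hence lies in $[\min(\honestset),\max(\honestset)]$. Otherwise $\levelval{l}$ is the weighted average of the checkpoints $\checkpointindex\overshoot{l}$ whose weight $\binaastateweight{\checkpointindex}{y}{l}$ is nonzero. For such a checkpoint, the convex-hull Validity of \binaryaa{} implies the output can be nonzero only if some honest node fed input $1$ to that instance; by \cref{alg:n_2_BA} a node feeds $1$ only to the two checkpoints bracketing its own input, so some honest $\oracleinput{j}$ satisfies $|\oracleinput{j}-\checkpointindex\overshoot{l}|\le\overshoot{l}$. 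Thus every contributing checkpoint satisfies $\checkpointindex\overshoot{l}\in[\min(\honestset)-\overshoot{l},\ \max(\honestset)+\overshoot{l}]$, and hence so does the weighted average $\levelval{l}$.

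\emph{Step 2 (which levels contribute).} By \cref{lem:level_weight_bound} --- specifically the second claim there, which uses only Validity of \binaryaa{} --- $\lpweight{l}=0$ for all $l>\critlevel$ with $\critlevel=\lceil\log(\honestrange/\overshoot{0})\rceil$, so any contributing level has $\overshoot{l}=\exponent^{l}\overshoot{0}\le\overshoot{\critlevel}$. If $\honestrange\le\overshoot{0}$ then $\critlevel\le0$, only level $0$ contributes, and $\overshoot{0}=\max(\overshoot{0},\honestrange)$; if $\honestrange>\overshoot{0}$ then $\critlevel$ is the smallest level with $\overshoot{\critlevel}\ge\honestrange$, so $\overshoot{\critlevel}$ is within a factor $\exponent$ of $\honestrange=\max(\overshoot{0},\honestrange)$. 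Combining this with Step 1 places every contributing $\levelval{l}$ inside $[\min(\honestset)-\max(\overshoot{0},\honestrange),\ \max(\honestset)+\max(\overshoot{0},\honestrange)]$, and the convex-combination observation of the first paragraph finishes the proof.

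\emph{Main obstacle.} The delicate part is Step 2: one must argue that no level whose separator is substantially larger than $\max(\overshoot{0},\honestrange)$ survives the cross-level weighting. This is exactly the purpose of the ``derivative'' weight $\lpweight{l}=\levelweight{l}\,|\levelweight{l}-\levelweight{l-1}|$ on \cref{algline:lp_weight}: once $\overshoot{l}\ge\honestrange$, Validity of \binaryaa{} forces $\levelweight{l}=1$, because an interval of width $\honestrange\le\overshoot{l}$ always fits inside some window $[(\checkpointindex-1)\overshoot{l},(\checkpointindex+1)\overshoot{l}]$, so every honest input is within $\overshoot{l}$ of a common checkpoint; hence $\levelweight{l}$ is constant $1$ from $\critlevel$ upward and the difference term annihilates $\lpweight{l}$ for $l>\critlevel$. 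The two things to get right are (i) this alignment fact and the resulting claim $\levelweight{l}=1$ for $l\ge\critlevel$, and (ii) the precise constant relating $\overshoot{\critlevel}$ to $\honestrange$ (the per-level doubling costs at most a factor $\exponent$), which is where the exact shape of the relaxation term $\max(\overshoot{0},\honestrange)$ has to be read off with care; once invoked through \cref{lem:level_weight_bound}, the remaining manipulations are routine.
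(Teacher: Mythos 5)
Your structure --- express $\oracleoutput{i}$ as a convex combination of the level values $\levelval{l}$ with nonnegative weights and denominator at least $\tfrac12$, use \cref{lem:level_weight_bound} to zero out levels $l>\critlevel$, and bound each surviving $\levelval{l}$ --- is the paper's, and your Step~1 together with the $\honestrange\le\overshoot{0}$ branch of Step~2 is sound. However, the gap is exactly where you flag it in obstacle~(ii), and it is \emph{not} routine. When $\honestrange>\overshoot{0}$, your Step~1 applied to level $\critlevel$ only yields $\levelval{\critlevel}\in[\min(\honestset)-\overshoot{\critlevel},\,\max(\honestset)+\overshoot{\critlevel}]$, and since $\overshoot{\critlevel}$ can be arbitrarily close to $2\honestrange$, your argument as written establishes only the relaxation $\max(\overshoot{0},2\honestrange)$, a factor~$2$ worse than the theorem's $\max(\overshoot{0},\honestrange)$. \cref{lem:level_weight_bound} says nothing finer about the aggregation \emph{inside} level $\critlevel$, so invoking it cannot close this gap.

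The paper closes it with a dedicated argument at level $\critlevel$. Take the checkpoint $\intervalmidpoint{k}{}{\critlevel}$ nearest the midpoint of $\honestset$; then all honest inputs lie in $[\intervalmidpoint{k-1}{}{\critlevel},\intervalmidpoint{k+1}{}{\critlevel}]$, so every honest node inputs $1$ to $\intervalmidpoint{k}{}{\critlevel}$ (whence $\binaastateweight{k}{}{\critlevel}=1$ by Validity), and only $\intervalmidpoint{k-1}{}{\critlevel},\intervalmidpoint{k+1}{}{\critlevel}$ can carry additional nonzero weight. The full-weight central checkpoint then dominates its two $\le1$-weight neighbors and pins $\levelval{\critlevel}$ inside $[\intervalmidpoint{k}{}{\critlevel}-\overshoot{\critlevel}/2,\,\intervalmidpoint{k}{}{\critlevel}+\overshoot{\critlevel}/2]$, a window half as wide as your $\pm\overshoot{\critlevel}$ and, crucially, anchored at $\intervalmidpoint{k}{}{\critlevel}$ rather than at the honest extremes. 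Combined with the fact that a nonzero neighbor weight forces an honest input on that side of $\intervalmidpoint{k}{}{\critlevel}$, and that $\intervalmidpoint{k}{}{\critlevel}$ is within $\overshoot{\critlevel}/2$ of the honest midpoint, this yields $\levelval{\critlevel}\in[\min(\honestset)-\honestrange,\max(\honestset)+\honestrange]$. That per-level refinement is the missing piece of your proof; without carrying it out you have only proved a strictly weaker statement.
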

\begin{proof}
	From~\cref{lem:level_weight_bound}, every level $l >\critlevel$, where $\critlevel= \lceil\log(\frac{\honestrange}{\overshoot{0}})\rceil$, has weight $\lpweight{l,i} = 0$. 
 
	First, we examine all checkpoints \intervalmidpoint{\checkpointindex}{k+1}{l} that can have a positive weight in~\cref{algline:lev_wavg} in~\cref{alg:n_2_BA} for levels $l<\critlevel$. We know that only checkpoints within a \overshoot{l} distance from an honest input \oracleinput{i} can have a positive weight. Therefore, no $\intervalmidpoint{\checkpointindex}{}{l}<\min(\honestset)-\overshoot{l}$ and $\intervalmidpoint{\checkpointindex}{}{l}>\max(\honestset)+\overshoot{l}$ can have a weight $\binaastateval{\checkpointindex}{}{l}>0$. This implies that the level's weight $\levelweight{l,i}\in(\min(\honestset)-\overshoot{l},\max(\honestset)+\overshoot{l}) \forall l<\critlevel$. The farthest point which can have a positive weight in \oracleoutput{i} until level $l<\critlevel$ is $\min(\honestset)-\honestrange$ and $\max(\honestset)+\honestrange$, because $\overshoot{l}<\honestrange, \forall l<\critlevel$. 
	
	Second, we examine checkpoints \intervalmidpoint{k}{}{l} at $l=\critlevel$, where $\overshoot{\critlevel}\in[\honestrange,2\honestrange)$. 
	There exists at least one  \intervalmidpoint{\checkpointindex}{}{l} such that $|\intervalmidpoint{\checkpointindex}{}{l}-\oracleinput{i}|\leq\overshoot{\critlevel}, \forall i\in\honestnodes$, which implies $\binaastateweight{\checkpointindex}{}{\critlevel} = 1$. The only other checkpoints that can have a non-zero weight are \intervalmidpoint{\checkpointindex-1}{}{\critlevel} and \intervalmidpoint{\checkpointindex+1}{}{\critlevel}, which ensures that level \critlevel's representative value \levelweight{\critlevel} must be within $[\frac{\intervalmidpoint{\checkpointindex-1}{}{\critlevel}+\intervalmidpoint{\checkpointindex}{}{\critlevel}}{2},\frac{\intervalmidpoint{\checkpointindex}{}{\critlevel}+\intervalmidpoint{\checkpointindex+1}{}{\critlevel}}{2}]$. As there must exist honest nodes $i,j$ such that $|\intervalmidpoint{\checkpointindex-1}{}{\critlevel}-\oracleinput{i}|<2\honestrange$ and $|\intervalmidpoint{\checkpointindex+1}{}{\critlevel}-\oracleinput{j}|<2\honestrange$, the relaxation can be at most \honestrange, which proves the theorem.
\end{proof}
\subsection{Agreement}
We prove that \protocol{} achieves $\epsilon$-agreement. 
\begin{theorem}
	Given that the \binaryaa{} protocol satisfies Agreement, Termination, and Validity, outputs of honest nodes \oracleoutput{i} after terminating \protocol{} must be within $\epsilon$ distance of each other; i.e.,
$		|\oracleoutput{i}-\oracleoutput{j}| \leq \epsilon \forall \{i,j\}\in\honestnodes$
\end{theorem}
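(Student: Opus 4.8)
The plan is to reduce the statement to two applications of the elementary fact that a weighted average is Lipschitz in its weights --- once across levels and once inside a level --- and then to check that the accumulated \aaprecision-errors are absorbed by the choice $\aaprecision = \epsilon/(4\maxrange\levelmax n)$.

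\emph{Setup.} Write $\oracleoutput{i} = T_i/S_i$ with $T_i = \sum_{l=0}^{\levelmax}\lpweight{l,i}\levelval{l,i}$ and $S_i = \sum_{l=0}^{\levelmax}\lpweight{l,i}$. Since adding a constant to every checkpoint shifts every $\levelval{l,i}$ and every $\oracleoutput{i}$ by that same constant, I may assume all $\levelval{l,i}$ lie in $[0,R]$, where --- reasoning as in \cref{thm:int_validity}, since only checkpoints within $\overshoot{l}\le\overshoot{\critlevel}<2\honestrange$ of the honest range can carry a positive weight at a level $l\le\critlevel$, and the fallback value $\oracleinput{i}$ is itself an honest input --- one gets $R = \bigo{\honestrange}\le\bigo{\maxrange}$. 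By \cref{thm:liveness}, $S_i\ge\tfrac12$; by \cref{lem:level_weight_bound}, $|\lpweight{l,i}-\lpweight{l,j}|\le 5\aaprecision$ for every $l$ and $\lpweight{l,i}=0$ for $l>\critlevel$, so all sums run effectively over $l\in\{0,\dots,\critlevel\}$ with $\critlevel\le\levelmax$. The identity $\oracleoutput{i}-\oracleoutput{j} = \oracleoutput{i}\cdot\tfrac{S_j-S_i}{S_j} + \tfrac{T_i-T_j}{S_j}$ bounds the first term by $\tfrac{R}{S_j}\sum_l|\lpweight{l,i}-\lpweight{l,j}| = \bigo{\maxrange\levelmax\aaprecision} = \bigo{\epsilon/n}$.

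\emph{Cross-level step.} For the second term, $|T_i-T_j|\le\sum_l\bigl(\lpweight{l,i}\,|\levelval{l,i}-\levelval{l,j}| + \levelval{l,j}\,|\lpweight{l,i}-\lpweight{l,j}|\bigr)$, and the second summands contribute $\bigo{\maxrange\levelmax\aaprecision}=\bigo{\epsilon/n}$ as above. It remains to control $\sum_l\lpweight{l,i}\,|\levelval{l,i}-\levelval{l,j}|$, and here is the one non-routine point: the per-level weighted average $\levelval{l} = \sum_k\binaastateval{\checkpointindex}{}{l}\intervalmidpoint{\checkpointindex}{}{l}\,/\,\sum_k\binaastateval{\checkpointindex}{}{l}$ has denominator $D_{l} = \sum_k\binaastateval{\checkpointindex}{}{l}$ with \emph{no uniform lower bound}, so the naive Lipschitz estimate can blow up. This is resolved by the observation that $\lpweight{l,i}\le\levelweight{l,i}\le D_{l,i}$ (the level weight is the \emph{max} of the checkpoint weights, hence at most their sum, and $|\levelweight{l,i}-\levelweight{l-1,i}|\le 1$), so the factor $\lpweight{l,i}$ cancels the troublesome $1/D_{l,i}$. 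Concretely, applying the weighted-average identity again inside level $l$ (shifting the active checkpoints into $[0,r_l]$), using \aaprecision-agreement of \binaryaa{} (so $|\binaastateval{\checkpointindex}{}{l,i}-\binaastateval{\checkpointindex}{}{l,j}|\le\aaprecision$ and $|D_{l,i}-D_{l,j}|\le N_l\aaprecision$, with $N_l$ the number of active checkpoints), and a short case split on the size of $D_{l,j}$ relative to $N_l\aaprecision$ --- in the degenerate direction $\lpweight{l,i}\le\levelweight{l,i}\le\aaprecision$ and the trivial bound $|\levelval{l,i}-\levelval{l,j}|\le r_l$ suffices --- yields $\lpweight{l,i}\,|\levelval{l,i}-\levelval{l,j}| = \bigo{N_l\, r_l\,\aaprecision}$ in all cases.

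\emph{Closing the budget, and the main obstacle.} Finally I would bound $\sum_{l=0}^{\critlevel} N_l\,r_l$ using two facts: (i) at each level every honest node inputs $1$ to only its two nearest checkpoints, and by Validity of \binaryaa{} only such checkpoints can acquire a positive weight, so $N_l\le 2(n-\faults)<2n$ (crucially \emph{not} the larger $\bigo{\honestrange/\overshoot{l}}$); and (ii) for $l\le\critlevel$ the active checkpoints span at most $\honestrange+2\overshoot{l}\le 5\honestrange$, so $r_l\le 5\honestrange$. Hence $\sum_{l\le\critlevel}N_l r_l\le 10n\honestrange(\levelmax+1)$, and $\sum_l\lpweight{l,i}|\levelval{l,i}-\levelval{l,j}| = \bigo{\aaprecision\, n\, \honestrange\, \levelmax} = \bigo{\epsilon\,\honestrange/\maxrange} = \bigo{\epsilon}$, using $\aaprecision=\epsilon/(4\maxrange\levelmax n)$ and $\honestrange\le\maxrange$. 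Collecting this with the two $\bigo{\epsilon/n}$ contributions and the factor $1/S_j\le 2$ gives $|\oracleoutput{i}-\oracleoutput{j}|\le\epsilon$ once constants are tracked carefully; the constant $4$ in the definition of \aaprecision{} is exactly what makes this bookkeeping close. The main difficulty is precisely this last accounting: a weighted average of weighted averages whose inner denominators can be arbitrarily small, where the per-checkpoint and per-level \aaprecision-errors must still sum to at most $\epsilon$ --- which works only because each level has $\bigo{n}$ active checkpoints of total span $\bigo{\honestrange}$ and because \aaprecision{} carries both $n$ and \levelmax{} in its denominator.
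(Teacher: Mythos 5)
Your proposal follows essentially the same route as the paper's proof: both decompose the difference of outer weighted averages into a cross-level error term (controlled by $|\lpweight{l,i}-\lpweight{l,j}| = \bigo{\aaprecision}$ from \cref{lem:level_weight_bound} and $S_j\ge\tfrac12$ from \cref{thm:liveness}) and a within-level term, and both rest on the same key cancellation --- $\lpweight{l,i}\le\levelweight{l,i}\le\sum_k\binaastateval{\checkpointindex}{}{l}$ --- to neutralize the potentially tiny per-level denominator; the paper packages this as the bound $\levelweight{l,i}/\sum_k\intweight{k,j}\le 2$. Your algebra regroups the paper's $\sum_l\adddiff{l}(\levelval{avg,i}-\levelval{l,j})$ into two pieces ($\oracleoutput{i}(S_j-S_i)/S_j$ plus $\sum_l\levelval{l,j}\adddiff{l}/S_j$), but these are identical after recombining, and your per-level count $N_l\le 2(n-\faults)$ and span $r_l=\bigo{\honestrange}$ correspond to the paper's $\checkpointcardi=\min(2\maxrange/\overshoot{0},\highthreshold)$ and the $\honestrange$ factor in $O_{1,l,i}$, so the final budgets close in the same way via $\aaprecision=\epsilon/(4\maxrange\levelmax n)$.
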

\begin{proof}
	We consider the difference between the weighted averages of two honest nodes $O = \oracleoutput{i}-\oracleoutput{j}$. 
	\chop{
	\begin{multline*}
		O = \oracleoutput{i}-\oracleoutput{j} \\= \frac{\levelweight{0,i}^2\times\levelval{0,i}+\sum_{l=1}^{\levelmax}(\levelweight{l,i}\times|\levelweight{l,i}-\levelweight{l-1,i}|\times\levelval{l,i})}{\levelweight{0,i}^2+\sum_{l=1}^{\levelmax}\levelweight{l,i}\times|\levelweight{l,i}-\levelweight{l-1,i}|}\\-\frac{\levelweight{0,j}^2\times\levelval{0,j}+\sum_{l=1}^{\levelmax}(\levelweight{l,j}\times|\levelweight{l,j}-\levelweight{l-1,j}|\times\levelval{l,j})}{\levelweight{0,j}^2+\sum_{l=1}^{\levelmax}\levelweight{l,j}\times|\levelweight{l,j}-\levelweight{l-1,j}|}
	\end{multline*}}
	\begin{gather*}
		O = \left(\frac{\sum_{l=0}^{\levelmax}\lpweight{l,i}\levelval{l,i}}{\sum_{l=0}^{\levelmax}\lpweight{l,i}}\right)-\left(\frac{\sum_{l=0}^{\levelmax}\lpweight{l,j}\levelval{l,j}}{\sum_{l=0}^{\levelmax}\lpweight{l,j}}\right)
	\end{gather*}
	We assume $\lpweight{l,j}=\lpweight{l,i}+\adddiff{l}$. The equation then transforms to the following.
	\chop{
	\begin{gather*}
		O = \left(\dfrac{\splitdfrac{(\sum_{l=0}^{\levelmax}\lpweight{l,i}\levelval{l,i})(\sum_{l=0}^{\levelmax}\lpweight{l,i}+\adddiff{l})}{-(\sum_{l=0}^{\levelmax}(\lpweight{l,i}+\adddiff{l})\levelval{l,j})(\sum_{l=0}^{\levelmax}\lpweight{l,i})}}{(\sum_{l=0}^{\levelmax}\lpweight{l,i})(\sum_{l=0}^{\levelmax}\lpweight{l,j})}\right)\\
		O = \left(\dfrac{\splitdfrac{(\sum_{l=0}^{\levelmax}\lpweight{l,i})(\sum_{l=0}^{\levelmax}\lpweight{l,i}(\levelval{l,i}-\levelval{l,j}))}{+(\sum_{l=0}^{\levelmax}\adddiff{l})(\sum_{l=0}^{\levelmax}\lpweight{l,i}\levelval{l,i})-(\sum_{l=0}^{\levelmax}(\adddiff{l}\levelval{l,j}))}}{(\sum_{l=0}^{\levelmax}\lpweight{l,i})(\sum_{l=0}^{\levelmax}\lpweight{l,j})}\right)
	\end{gather*}}
	\begin{multline*}\label{eqn:splitprove}
		O = \left(\frac{\sum_{l=0}^{\levelmax}\lpweight{l,i}(\levelval{l,i}-\levelval{l,j})+\adddiff{l}(\levelval{avg,i}-\levelval{l,j})}{\sum_{l=0}^{\levelmax}\lpweight{l,j}}\right)
	\end{multline*}
	The term $\levelval{avg,i} = \frac{\sum_{l=0}^{\levelmax}\lpweight{l,i}\levelval{l,i}}{\sum_{l=0}^{\levelmax}\lpweight{l,i}}$. From~\cref{thm:liveness}, we know that $\sum_{l=0}^{\levelmax}\lpweight{l,j} \geq \frac{1}{2}$. Therefore, we rewrite the equation as the sum of the following two terms and then examine each term individually.
	\begin{multline*}
		O \leq \sum_{l=0}^{\levelmax}\left(\lpweight{l,i}|\levelval{l,i}-\levelval{l,j}|+\adddiff{l}|\levelval{avg,i}-\levelval{l,j}|\right)
	\end{multline*}
	We split the above sum into two different terms: $$O_1 = \sum_{l=0}^{\levelmax}\left(\lpweight{l,i}|\levelval{l,i}-\levelval{l,j}|\right)$$
	
	$$O_2=\sum_{l=0}^{\levelmax}\left(\adddiff{l}|\levelval{avg,i}-\levelval{l,j}|\right)$$We analyze level $l$'s contribution in $O_1$ denoted by $O_{1,l,i}=\lpweight{l,i}|\levelval{l,i}-\levelval{l,j}|$. Honest nodes can input 1 to at most $\frac{\maxrange}{\overshoot{l}}$ checkpoints at level $l$, bounded by the total number of honest nodes \highthreshold. We denote the parameter $\checkpointcardi = \min(\frac{2\maxrange}{\overshoot{0}},\highthreshold{})$, where the weighted average can have a non-zero weight for at most \checkpointcardi{} checkpoints.
	We refer to these intervals using the terms \intweight{k,i} and \intmidpoint{k}, where $k\in\{1,\hdots,\highthreshold\}$. Additionally, $\lpweight{l,i}\leq\levelweight{l,i}$ based on~\cref{fig:multlevel}. We rewrite the above equation with these variables.
	\begin{gather*}
		O_{1,l,i} \leq \levelweight{l,i}\left|\frac{\sum_{k=1}^{\checkpointcardi}\intweight{k,i}\intmidpoint{k}}{\sum_{k=1}^{\highthreshold}\intweight{k,i}}-\frac{\sum_{k=1}^{\checkpointcardi}\intweight{k,j}\intmidpoint{k}}{\sum_{k=1}^{\highthreshold}\intweight{k,j}}\right|
	\end{gather*}
	We write $\intweight{k,j} = \intweight{k,i}+\adddiff{k}$.
	\begin{gather*}
		O_{1,l,i} \leq \levelweight{l,i}\left|\dfrac{\splitdfrac{(\sum_{k=1}^{\checkpointcardi}\intweight{k,i}\intmidpoint{k})(\sum_{k=1}^{\checkpointcardi}\intweight{k,i}+\adddiff{k})}{-(\sum_{k=1}^{\checkpointcardi}(\intweight{k,i}+\adddiff{i})\intmidpoint{k})(\sum_{k=1}^{\checkpointcardi}(\intweight{k,i}))}}{(\sum_{k=1}^{\checkpointcardi}\intweight{k,i})(\sum_{k=1}^{\checkpointcardi}\intweight{k,j})}\right|\\
        \implies O_{1,l,i} \leq \levelweight{l,i}\left|\frac{\sum_{k=1}^{\checkpointcardi}\adddiff{k}(\levelval{l,i}-\intmidpoint{k})}{\sum_{k=1}^{\checkpointcardi}\intweight{k,j}}\right|\\
        \implies O_{1,l,i} \leq \frac{\checkpointcardi\levelweight{l,i}\aaprecision\honestrange}{\sum_{k=1}^{\checkpointcardi}\intweight{k,j}}
	\end{gather*}
	As $\levelweight{l,i}=\max\{\intweight{k,i}: k\in\{1,\hdots,\checkpointcardi\}\}$, the term $\frac{\levelweight{l,i}}{\sum_{k=1}^{\checkpointcardi}\intweight{k,j}} \leq 1+\frac{\aaprecision}{\levelweight{l,j}} \leq 2$. Therefore, the term $O_1$ is upper bounded as follows: $
		O_1 \leq 2\maxrange\levelmax\aaprecision\checkpointcardi
    $.
	
	Now, we examine the second term $O_2 = \sum_{l=0}^{\levelmax}\adddiff{l}|\levelval{avg,i}-\levelval{l,j}|$. From~\cref{lem:level_weight_bound}, we know that the level weights $|\levelweight{l,i}-\levelweight{l,j}|\leq\aaprecision$, meaning $\adddiff{l}\leq \aaprecision\forall l\in\levelset$. Further, after a level $l > \log_2(\frac{\honestrange}{\overshoot{0}})+1$, we know $\levelweight{l,i}=0$ and $\adddiff{l}=0$. Finally, from the Validity of \protocol{} in \cref{thm:int_validity}, \levelweight{avg,i} always lies at a distance at most \honestrange{} from other honest inputs \oracleinput{i}.  Finally, the difference $|\levelval{avg,i}-\levelval{l,j}|<3\honestrange$. This gives  $O_2\leq 6\levelmax\maxrange\aaprecision$.
	Overall,
	\begin{gather*}
		O \leq  2\maxrange\levelmax\aaprecision(\checkpointcardi+3)\leq4\maxrange\levelmax\aaprecision\checkpointcardi
	\end{gather*}
	The parameters $\checkpointcardi= \min(\dfrac{2\maxrange}{\overshoot{0}},n), \levelmax = \log(\dfrac{\maxrange}{\overshoot{0}})$. 
	
	Therefore, \protocol{} achieves \approxconsensus{} when run for $\bigo{\dfrac{\maxrange\levelmax\checkpointcardi}{\epsilon}}$ rounds.
\end{proof}

\subsection{Complexity Analysis}\label{subsec:comm}
We analyze the communication complexity of \protocol{}. A level $l$ has at most $\frac{\honestrange}{\overshoot{l}}$ checkpoints \intervalmidpoint{\checkpointindex}{}{l} to which honest nodes can input 1. Based on the optimization we described in~\cref{subsec:optimize_comm}, level $l$ requires \bigo{n^2\min(\frac{\honestrange}{\overshoot{l}},n)} bits of communication per round. Overall, \protocol{} uses \bigo{n^2\min(\frac{\honestrange}{\overshoot{0}},n\levelmax)} bits of communication per round. Combined over \roundmax{} rounds, this complexity totals to \bigo{n^2\min(\frac{\honestrange}{\overshoot{0}},n\levelmax)\log(\frac{\maxrange\levelmax\min(\frac{\maxrange}{\overshoot{0}},n)}{\epsilon})} bits. 

\revisionchanges{\paragraph{Setting parameters for \name{}} \name{} requires setting \overshoot{0} and \maxrange as global system parameters. For the scope of this paper, we statically set $\overshoot{0} = \epsilon$, which ensures the minimum Validity relaxation is $\epsilon$. We set the parameter \maxrange using the assumption that honest nodes’ inputs come from a thin-tailed probability distribution. We theoretically analyze \name’s communication complexity (in bits) under different distributions. We discuss the communication complexity under different $\maxrange,\epsilon,$ and $\honestrange$ in~\cref{tab:comm_analysis}. Empirically, we collect and analyze data from the underlying applications and statistically infer the maximum possible difference between honest inputs. In the absence of such historical data from the application, \maxrange can be set based on domain knowledge. For example, in a use case where nodes want to agree on the price of Bitcoin, we can set \maxrange to be the maximum possible price observed so far - for example, $\maxrange=100000\$$. This way of setting \maxrange degrades the performance of \name by increasing its round complexity. }

\paragraph{Analysis under probability distributions}\label{subsec:prob_analysis} 
Many applications model real-world events using probability distributions. This analysis enables us to estimate the parameters in use and accordingly benchmark \protocol{}'s performance. We consider two types of distributions: (a) Thin-tailed distributions like Normal, Gamma, and Lognormal, and (b) Fat-tailed distributions like Pareto. We use the extreme value theory and order statistics to conduct this analysis. We calculate \maxrange{} based on a parameter $\statparam{}$, where we ensure that the range $\honestrange\leq\maxrange$ with probability $1-\negl(\statparam{})$. 

Thin-tailed distributions are used in error modeling in practice. For instance, Normal, Lognormal, and Gamma distributions have been used in sensor noise modeling and modeling sizes of insurance claims, respectively~\cite{boland2007statistical,embrechts2013modelling}. The extreme value theory suggests that the difference \honestrange{} between the maximum and minimum of $n$ randomly drawn samples from both Normal and Gamma distributions follows a Gumbel distribution, with CDF of $F(x) = e^{-e^{-x}}$~\cite[page 155]{embrechts2013modelling}, and the mean of the distribution of \honestrange{} grows as \bigo{\log(n)} for Normal and Gamma distributions. We calculate \maxrange{} using \statparam{} and $n$ in the CDF. Using this estimation, we observe that $\maxrange = \bigo{\statparam{}\log(n)}$. With this \maxrange{} value, \protocol{}'s communication complexity is \bigo{n^2\frac{\honestrange}{\epsilon}(\log(\frac{\honestrange}{\epsilon}\log{\frac{\honestrange}{\epsilon}})+\log(\statparam{}\log{n}))}.

In the case of honest samples coming from distributions like Pareto or Loggamma distributions with shape parameter $\alpha$, which have relatively fatter tails, the range of honest inputs follows a Frechet distribution, with CDF $F(x) = e^{-x^{-\alpha}}$~\cite[page 153]{embrechts2013modelling}. The mean of this distribution grows as \bigo{n^{\frac{1}{\alpha}}} In this case, we observe that $\maxrange = e^{\statparam{}}n^{\frac{1}{\alpha}}$, which gives \protocol{}'s communication to be 
\bigo{n^2\frac{\honestrange}{\epsilon}(\log(\frac{\honestrange}{\epsilon}\log{\frac{\honestrange}{\epsilon}})+\frac{\statparam{}}{\alpha}\log{n}))} bits. 
 
\begin{table}
	\renewcommand{\thefootnote}{\fnsymbol{footnote}}
	\def\footnoteCC{\footnotemark[2]}
	\def\footnoteValidity{\footnotemark[4]}
	\def\footnoteDelphiBID{\footnotemark[5]}
	\def\footnoteAppxAggr{\footnotemark[6]}
	\def\footnoteDora{\footnotemark[1]}
	\def\footnoteFin{\footnotemark[3]}
	\centering
	\caption[Related work comparison]{\small Communication and round complexity of \protocol{} under input conditions
	}%
	\label{tab:comm_analysis}
	\newcolumntype{s}{>{\centering\hsize=.25\hsize}X}
	\newcolumntype{Y}{>{\centering\arraybackslash}X}
	\newcolumntype{Z}{>{\centering\arraybackslash\hsize=1.8\hsize}X}
	\newcolumntype{A}{>{\centering\hsize=2\hsize}X}
	\newcolumntype{d}{>{\centering\hsize=.6\hsize}X}
	\newcolumntype{k}{>{\centering\hsize=1.4\hsize}X}
	\newcolumntype{M}{>{\centering\arraybackslash\hsize=1.8\hsize}X}
	\centering
	\begin{tabularx}{\linewidth}[!ht]{d d Z Y}
		\toprule
		\multicolumn{2}{c}{Condition}
		&
		Communication
		&
		Rounds
		\\
		\multicolumn{1}{c}{\maxrange}
		&
		\multicolumn{1}{c}{\honestrange}
		&
		&
		\\
		\midrule
		\multicolumn{1}{c}{\bigo{\epsilon}}
		&
		\multicolumn{1}{c}{\bigo{\epsilon}}
		&
		\bigo{n^2\log(\frac{\honestrange}{\epsilon})}
		&
		\bigo{\log(\frac{\honestrange}{\epsilon})}
		\\
		\multirow{2}{*}{\bigo{f(n)\epsilon}} 
		&
		\bigo{\epsilon}
		&
		\bigo{n^2(\log(\frac{n\maxrange}{\epsilon})+\log\log(f(n)))} 
		&
		\bigo{\log(\frac{n\maxrange}{\epsilon})+\log\log(f(n))} 
		\\
		&
		\bigo{\maxrange}
		&
		\bigo{n^3\log(f(n))(\log(\frac{n\maxrange}{\epsilon})+\log\log(f(n)))} 
		&
		\bigo{\log(\frac{n\maxrange}{\epsilon})+\log\log(f(n))} 
		\\
		\bottomrule
	\end{tabularx}
	\begin{flushleft}
		{\small $\maxrange$, $\epsilon$ are statically set parameters, as specified in~\cref{alg:n_2_BA}, whereas range $\honestrange\leq\maxrange$ is a runtime parameter that changes with varying inputs. $f(n)$ is any function growing faster than $n$.\ \maxrange{} is a constant measured based on the underlying data distribution, whereas $\epsilon$ is statically set at a system level. Under $\maxrange$ like $\maxrange = poly(n)\epsilon$, the $\log\log(n)$ term becomes $\log(polylog(n))$.}
	\end{flushleft}
\end{table}

\section{Application}
Oracle networks play an important role in the web3 ecosystem. They provide trusted information about real-world events to blockchains, which utilize it to execute logic on smart contracts. Oracle networks empower multiple applications in the DeFi space, like provisioning loans and trading futures, by providing reliable market data about stocks and cryptocurrencies~\cite{chainlink2019usecases}. Oracle nodes also play an important role in decentralized insurance applications like crop insurance, providing metrics about critical variables like forecasted rainfall and potential drought~\cite{chainlink2019usecases}. 
In these cases, the network must provide values representative of the real-world event, even in the presence of malicious nodes.

An oracle network must provide an attested value within the range of honest nodes' inputs to the blockchain, which verifies and includes it in the ledger. Chakka \etal~\cite{kate2023dora} defined this as the Distributed Oracle Agreement (DORA) problem, where nodes must agree on a value within the range of honest inputs, assisted by the external blockchain modeled as an SMR channel. A strawman solution to solving DORA is for each honest oracle to broadcast a signature on its input \oracleinput{i}, collect \highthreshold{} other signed values, submit this list of \highthreshold{} values to the SMR channel, wait for the first list in the SMR channel, and output the median of this list. The properties of SMR ensure that all honest oracles see the same first list and reach an agreement. However, this approach requires the SMR channel to verify \bigo{n} signatures and include a message of size \bigo{n\secparam} in the ledger, which makes it inefficient and expensive.

Prior approaches like Chainlink's reporting protocol~\cite{breidenbach2021chainlink} and Chakka \etal~\cite{kate2023dora} generate an aggregated BLS signature of size \bigo{n+\secparam} on a single value within the range of honest inputs\footnote{Message size can be reduced to \bigo{\secparam} with threshold BLS signatures, but require a threshold setup}. In Chainlink, nodes use partially synchronous convex BA to reach an agreement and then create a signature on the output~\cite{breidenbach2021chainlink}. Chakka \etal{} builds on the described strawman approach by adding a round to generate a succinct signature~\cite{kate2023dora}, and uses the SMR channel for agreement. However, both approaches are computationally expensive with \bigo{n^2} signature verifications per node per attested value (while achieving adaptive security). Refer to \cref{tab:relwork_oracle} for a detailed comparison between these approaches. 

We solve the DORA problem in a computationally efficient manner using \name. We introduce an additional step after reaching \approxconsensus, where nodes round off their input to the closest integer multiple of $\epsilon$. Next, they broadcast a signature on their rounded values, wait for \minthreshold{} signatures on a given value, and aggregate them to form a succinct signed message.

\paragraph{Intuition} After rounding, honest nodes' outputs must be within at most two adjacent $\epsilon$-checkpoints, which implies at least \minthreshold{} signatures must be broadcasted for at least one checkpoint. Further, no other value can receive $\geq\minthreshold$ signatures. Therefore, \name{} solves the DORA problem without using any additional computation than what is required to send the output to the SMR channel. The rounding operation incurs a further $\epsilon$ relaxation to the Validity.
\begin{table*}
	\footnotesize
	\renewcommand{\thefootnote}{\fnsymbol{footnote}}
	\def\footnoteCC{\footnotemark[2]}
	\def\footnoteValidity{\footnotemark[4]}
	\def\footnoteDelphiBID{\footnotemark[5]}
	\def\footnoteAppxAggr{\footnotemark[6]}
	\def\footnoteDora{\footnotemark[1]}
	\def\footnoteFin{\footnotemark[3]}
	\centering
	\caption[Related work comparison]{\small Comparison of relevant oracle reporting protocols
	}%
	\label{tab:relwork_oracle}
	\newcolumntype{s}{>{\centering\hsize=.25\hsize}X}
	\newcolumntype{Y}{>{\centering\arraybackslash}X}
	\newcolumntype{L}{>{\centering\hsize=0.7\hsize}X}
	\newcolumntype{Z}{>{\centering\hsize=1.3\hsize}X}
	\newcolumntype{d}{>{\centering\hsize=.5\hsize}X}
	\newcolumntype{k}{>{\centering\hsize=1.2\hsize}X}
	\newcolumntype{M}{>{\centering\arraybackslash\hsize=1.5\hsize}X}
	\centering
	\begin{tabularx}{\linewidth}[!ht]{X d M L d d Z Y}
		\toprule
		\multirow{2}{*}{\textbf{Protocol}}
		&
		\multirow{2}{*}{\textbf{Network}}
		&
		\textbf{Communication}
		&
		\textbf{Adaptively}
		&
		\multicolumn{2}{c}{\textbf{Computation}}
		&
		\multirow{2}{*}{\textbf{Rounds}}
		&
		\multirow{2}{*}{\textbf{Validity}}
		\\
		
		&
		
		&
		\textbf{Complexity (in bits)}
		&
		\textbf{Secure?}
		&
		\textbf{Sign}
		&
		\textbf{Verf}
		&
		
		&
		\\
		\midrule
		Chainlink~\cite{breidenbach2021chainlink}
		&
		p sync.
		&
		\bigo{ln^3 + \secparam{} n^3}
		&
		\xmark
		&
		\bigo{1}
		&
		\bigo{n}
		&
		4
		&
		$[m,M]$
		\\
		DORA~\cite{kate2023dora}\footnoteDora{}
		&
		async
		&
		\bigo{ln^2+\secparam{} n^2}
		&
		\xmark
		&
		\bigo{1}
		&
		\bigo{n}
		&
		3
		&
		$[m,M]$
		\\
		\midrule
		\textbf{\protocol}\footnoteDora{}
		&
		async
		&
		\bigo{ln^2\frac{\honestrange}{\epsilon}(\log(\frac{\honestrange}{\epsilon}\log{\frac{\honestrange}{\epsilon}})+\log(\statparam{}\log{n}))}
		&
		\checkmark
		&
		0
		&
		0
		&
		\bigo{\log(\frac{\honestrange}{\epsilon}\log{\frac{\honestrange}{\epsilon}})+\log(\statparam{}\log{n})}
		&
		$[m-\honestrange-\epsilon,M+\honestrange+\epsilon]$
		\\
		\bottomrule
	\end{tabularx}
	\begin{flushleft}
		{\small
			\honestset{} is the set of all honest inputs, $l$ is the size of each input, $m = \min(\honestset), M=\max(\honestset)$, $\honestrange = M-m$ is the range of honest inputs, and $\secparam{}$ is the cryptographic security parameter. $l< \secparam{}$ in practice.\ 
			\textbf{SMR Channel} In all protocols, oracles produce an attested output and submit it to the blockchain for consumption by smart contracts\cite[Section 5.4]{breidenbach2021chainlink}. This step is common for all protocols, so we do not count its cost in this table.\
			\textbf{Adaptive Security} Chainlink~\cite{breidenbach2021chainlink} and DORA~\cite{kate2023dora} can be made adaptively secure at the expense of $\bigo{n}$ factor increase in communication and computation complexities. Chainlink also incurs a $\bigo{n}$ increase in round complexity. \
			\footnoteDora{}\textbf{Agreement} These protocols can produce multiple outputs with the specified Validity condition. DORA~\cite{kate2023dora} can produce at least \bigo{n} possible outputs, whereas \protocol{} can produce at most two possible outputs. The external blockchain orders them and consumes the first output. \ 
		}
	\end{flushleft}
\end{table*}

\section{Evaluation}\label{sec:eval}
We evaluate \name in the context of two applications: (a) an oracle network reporting the price of a cryptocurrency, and (b) a group of surveillance drones agreeing on the location of an object. We analyze the input data in each application and accordingly configure \protocol{} to maximize efficiency while ensuring security. 

\subsection{Oracle Network}
In the first application, we consider an oracle network that reports the trading price of Bitcoin, a prominent cryptocurrency, in US Dollars. 
We configure the network to produce one price report every minute.  
Nodes in the Oracle network aim to agree on the current price of Bitcoin. 
Once a minute, each node measures the trading price of Bitcoin by querying one or a set of prominent cryptocurrency exchanges and computing the median of responses. 
Nodes then input this price, \oracleinput{i}, into an instance of \protocol{} protocol and report the output as Bitcoin's trading price for that minute. 
\begin{figure}
	\centering
	\includegraphics[width=0.9\linewidth]{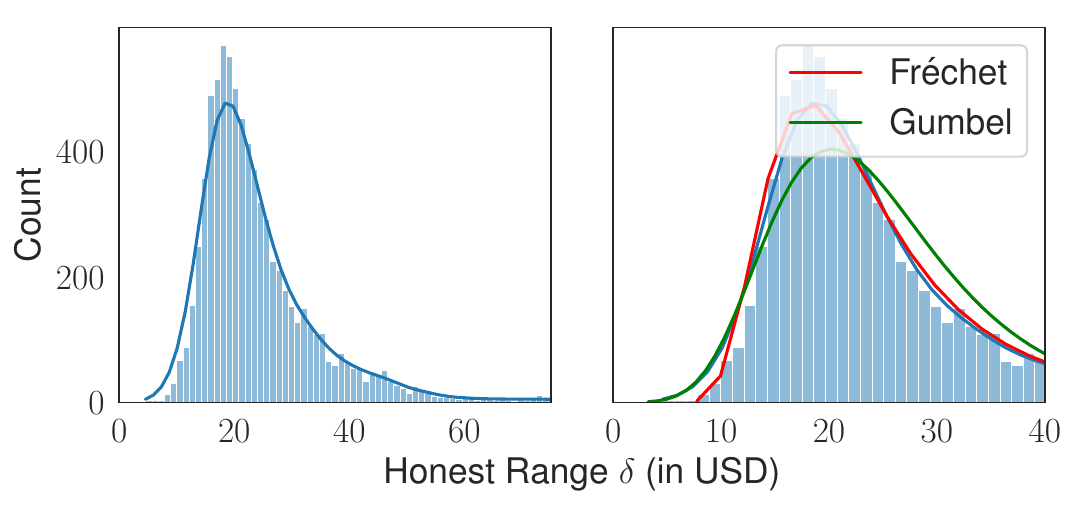}
	\caption{\small\textbf{Bitcoin price range histogram}: We plot a histogram of observed $\honestrange = |\max(\honestset)-\min(\honestset)|$ in US Dollars on the x-axis and the number of times the \honestrange values in the bin appeared in the two weeks on the y-axis. We also fit different probability distributions and observe that Fréchet and Gumbel distributions, the two distributions used to model extreme order values, are the closest fit, with Fréchet being the better fit. 
	}
	\label{fig:bitcoinpricehistogram}
\end{figure}

\paragraph{Range analysis} We configure \protocol{} based on a comprehensive analysis of the difference between honest inputs or range $\honestrange = \max(\honestset)-\min(\honestset)$, where \honestset is the set of inputs of honest nodes. We collected price feeds of Bitcoin from 10 prominent exchanges: \texttt{Binance}, \texttt{Coinbase}, \texttt{Crypto.com}, \texttt{Gate.io}, \texttt{Huobi}, \texttt{Mexc}, \texttt{Poloniex}, \texttt{Bybit}, \texttt{Kucoin}, and \texttt{Kraken}, for a two-week timespan from July 7th to July 21st, 2023, at a granularity of one reading every minute. We then calculated the maximum difference between the prices reported by these exchanges each minute, giving us one value per minute for two weeks. This value represents the difference between honest nodes' inputs or \honestrange because each node queries at least one exchange. We then plot a histogram of all these \honestrange values based on their frequency of occurrence in~\cref{fig:bitcoinpricehistogram}.

We observe that beyond $\honestrange=30\$$, higher $\honestrange$ values become increasingly improbable. The \honestrange values are below $100\$$ for $99.2\%$ of the time and below $300\$$ for $100\%$. We also fit various probability distributions for this data and observed the Frechet distribution with $\alpha=4.41$ and a scale factor of 29.3 to be the best fit for this data, implying the underlying distribution of inputs is a Loggamma distribution. We configure \protocol{}'s parameters using these probability distribution parameters. We derive maximum range $\maxrange=2000\$$ such that the range $\honestrange>\maxrange$ only with probability $p = 1-10^{-10}$, which guarantees a statistical security of $\statparam{}= 30$ bits. In this context and under these distribution parameters, assuming we run one instance of \protocol{} every minute, this assumption gets broken once every 2000 years. We set $\overshoot{0} = \epsilon = 2\$$.
\subsection{Distributed CPS}
In the second application, we consider a group of drones flying in an area, aiming to detect and localize objects. Such systems have gained increased prominence in light of recent advances in autonomous and semi-autonomous CPS~\cite{bandarupalli2021droneswarms,bandarupalli2023vega}. The drones are equipped with cameras to take photos of the area and are guided by the GPS navigation system. Each drone detects cars in the area using a deep learning-based object detection program like EfficientDet~\cite{tan2020efficientdet}. This program takes an image as input and outputs a rectangular bounding box of relative coordinates in the image containing the object. Each drone uses the center of this bounding box and its own geo-location provided by the GPS to estimate the geo-location of the target object. Given that the bounding box coordinates measured by drone $i$ is \locationcalc{BB,i}, its height is \droneheight{}m, and the drone's 2D location from GPS is \locationcalc{GPS,i}, the estimate of the target's 2D location \locationcalc{T,i} is calculated as 
$	\locationcalc{T,i} =  \locationcalc{BB,i}+\locationcalc{GPS,i}$.
The drones use this estimated location as an input to \protocol{} protocol to agree on the geo-location of each target in the area. As input $\locationcalc{T,i}=(x,y)$ is a 2D vector, drones use two instances of \protocol{} to agree on each coordinate,  $\locationcalcx{T,i},\locationcalcy{T,i}$, individually. 

\paragraph{Range Analysis} We analyze the coordinate-wise ranges $\honestrangex = |\max(\locationcalcx{T,i})-\min(\locationcalcx{T,j})|$ and $\honestrangey = |\max(\locationcalcy{T,i})-\min(\locationcalcx{T,j})|$ by first analyzing the distribution of the error distance $\distanceerror{i} = |\locationcalc{T,i}-\locationcalc{GT}|$. \distanceerror{i} quantifies the accuracy of the estimate \locationcalc{T,i} with the true location \locationcalc{GT} of the object. We then use extreme value theory on this distribution to estimate the ranges \honestrangex,\honestrangey. We consider cars to be the objects that need to be localized. We assume the drones are deployed at height $\droneheight{}=45$m or $150$ft. The vectors \locationcalc{BB,i} and \locationcalc{GPS,i} both contribute to the parameter \distanceerror{i}.
\begin{figure}
	\centering
	\includegraphics[width=0.9\linewidth]{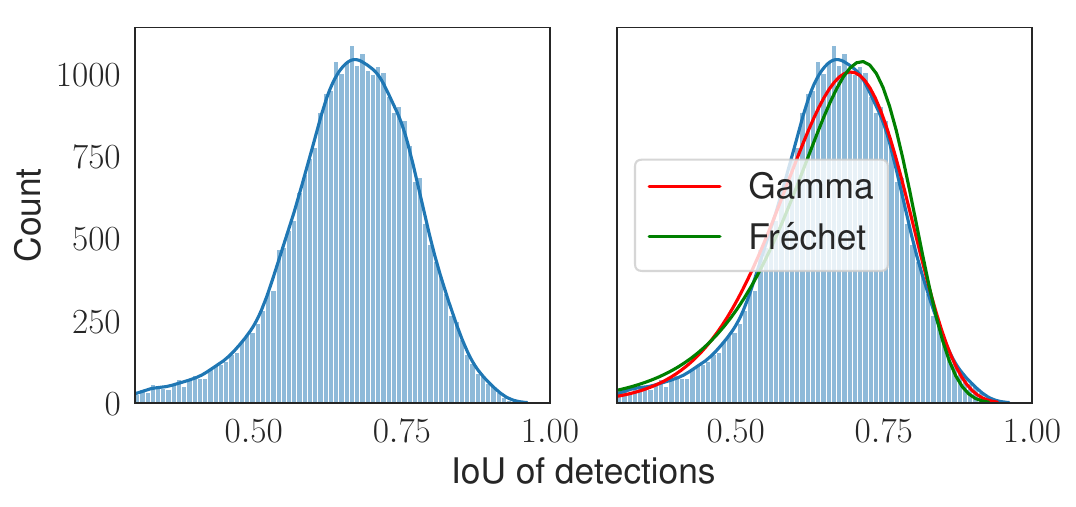}
	\caption{\small\textbf{IoU histogram for Drone-based object detection}: We plot the IoU values of detections output by the detection program and analyze their incidence by bucketing them into bins. These values follow a Gamma distribution, which has a thin tail. }
	\label{fig:iouhistogram}
\end{figure}

\paragraph{Intersection-over-Union(IoU) analysis} We empirically estimate the error induced by the object detector in \locationcalc{BB,i} by measuring and plotting the Intersection-over-Union(IoU) \iou{i} of the predictions. The IoU measures the overlap between the detector's output bounding box and the ground truth bounding box of the object. A higher IoU \iou{i} implies a more accurate detection. We characterize this IoU distribution by first training a model of EfficientDet~\cite{tan2020efficientdet} on a training dataset consisting of over 100000 cars~\cite{zhu2020vision,du2018unmanned,bandarupalli2023vega} and then testing it on a test dataset with 80000 cars. We then plot a histogram of all these IoU values based on their frequency of occurrence in~\cref{fig:iouhistogram}. We observe that EfficientDet produces detections with IoU following a Gamma distribution with mean $\iou{\mathsf{mean},i} = 0.87$. Further, EfficientDet outputs a detection with $\iou{i}<0.6$ IoU only in $0.37\%$ of the cases. 

We translate the IoU \iou{i} into the error distance vector $\distanceerror{i}$. The $x$ and $y$ coordinates of this vector are bounded by $\distanceerrorx{i},\distanceerrory{i}\leq(1-\iou{i})\times\diagonalbb$, where \diagonalbb is the length of the diagonal of the ground truth bounding box. We calculate \diagonalbb assuming by using the standard dimensions of $5$m length and $2$m width of a car. This formula gives us the distribution of the vector $\distanceerror{i} = (5.3\times(1-\iou{i}),5.3\times(1-\iou{i}))$, with a mean of $0.7$m.  

For \locationcalc{GPS}, the US FAA report on GPS accuracy documents the horizontal location error induced by GPS over 417 million samples~\cite{usnstb2021GPSRepost}. The accuracy is within $5$m $99.99\%$ percent of the time, with an average accuracy of $1.3$m.

\paragraph{Distribution fitting} Overall, the parameters \distanceerrorx{i},\distanceerrorx{i} have an expected value of $2$m and do not exceed $d=10.5$m $99.99\%$ of the time. However, analyzing the combined probability distribution is challenging because both \locationcalc{GPS} and \locationcalc{BB} are from different distributions. We perform asymptotic analysis by upper bounding the \locationcalc{GPS} distribution using a Gamma distribution of the same scale as \locationcalc{BB}. This approximation allows us to combine \locationcalc{GPS} and \locationcalc{BB} into a Gamma distribution with $scale=0.18$ and $shape=30.77$. For large $n$, extreme value theory suggests that the range of $n$ independently drawn inputs from a Gamma distribution is a Gumbel distribution~\cite[page 155]{embrechts2013modelling}. From the analysis in~\cref{subsec:prob_analysis}, a maximum range $\maxrange = \bigo{\statparam{}\honestrange_{\mathsf{mean}}}$. In this context, we set $\overshoot{0}=\epsilon=0.5$m and $\maxrange = 50$m. 
\begin{figure*}[t]
	\begin{multicols}{3}
		\begin{subfigure}[t]{\linewidth}
			\centering
			\includegraphics[width=\linewidth]{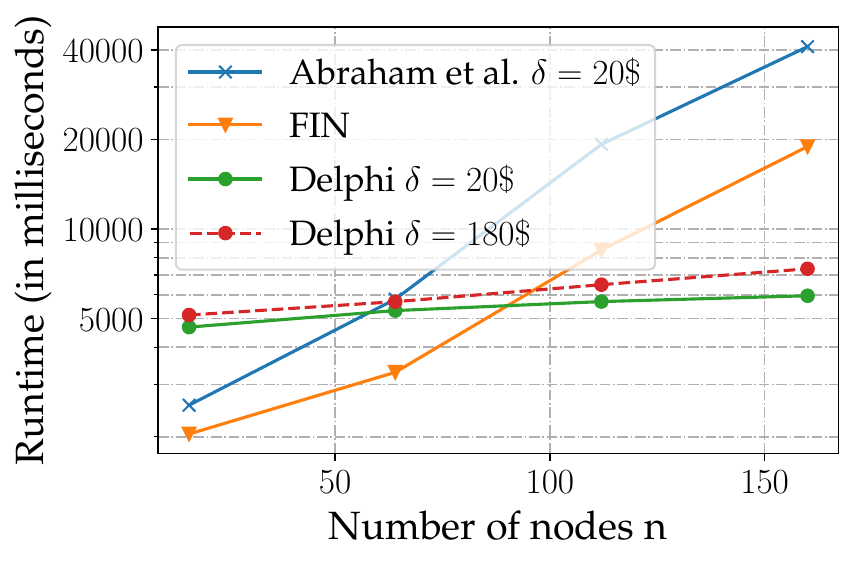}
			\caption{\small\textbf{Runtime vs $n$ on AWS}: We report the runtime of nodes reaching \approxconsensus on the price of Bitcoin. \honestrange is the range of honest inputs in US Dollars. \protocol{}'s config is $\overshoot{0}=10\$, \maxrange = 2000\$, \epsilon=2\$$. \protocol{}'s runtime increases much slowly with $n$ than FIN and Abraham \etal, and does not vary much with range $\honestrange$.}
			\label{fig:latencyaws}
		\end{subfigure}
		\begin{subfigure}[t]{\linewidth}
			\centering
			\includegraphics[width=\linewidth]{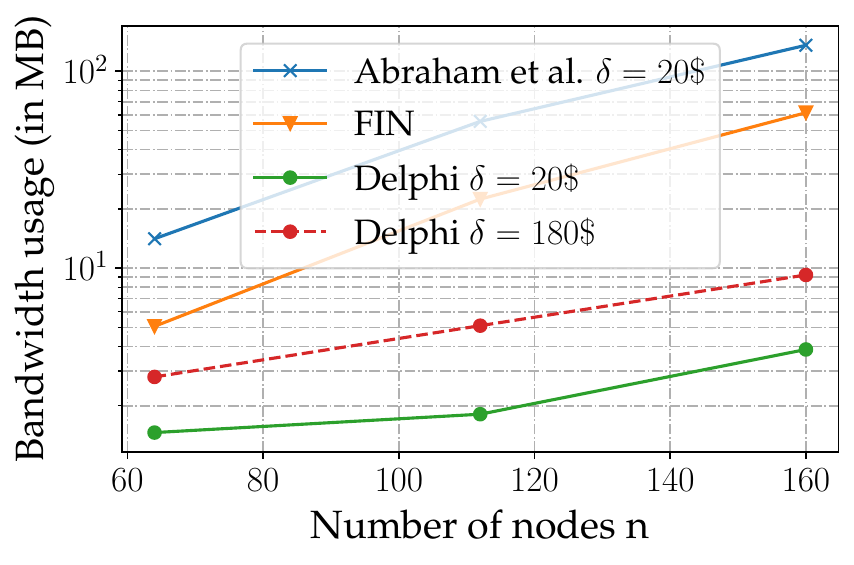}
			\caption{\small\textbf{Network bandwidth vs $n$ on AWS}: We report the amount of data used by each protocol to reach an agreement on the price of Bitcoin. \protocol{}'s config is $\overshoot{0}=\epsilon=2\$, \maxrange = 2000\$$. \protocol{}'s bandwidth usage is an order of magnitude lesser than FIN and Abraham \etal and also grows at a slower pace.}
			\label{fig:networkbandwidthaws}
		\end{subfigure}
		\begin{subfigure}[t]{\linewidth}
			\centering
			\includegraphics[width=\linewidth]{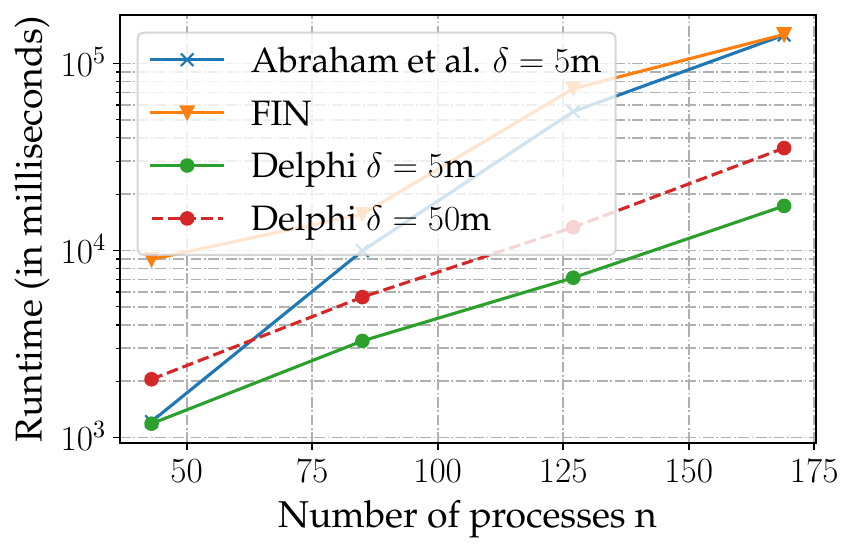}
			\caption{\small\textbf{Runtime vs $n$ on Embedded testbed}: We report the time taken to reach \approxconsensus on the location of a car seen by a group of drones. \honestrange is the range of honest inputs in meters. \protocol{}'s config is $\maxrange = 50m, \overshoot{0}=\epsilon=50cm$. \protocol{} terminates an order faster than FIN and Abraham \etal In contrast with AWS, a higher range \honestrange takes more time to terminate in this testbed.}
			\label{fig:latencyrpi}
		\end{subfigure}
	\end{multicols}
\vspace{-5mm}
	\caption{Scalability results of \protocol{}}
	\label{fig:delphi_comp}
\end{figure*}
\subsection{Implementation and Testbed Details}
We implement \protocol{} in Rust with the \texttt{tokio} library as our asynchronous runtime\cite{bandarupalli2024delphiRS} \footnote{Available at https://github.com/akhilsb/delphi-rs}. We use Hash-based Message Authentication Codes (HMAC) with the \texttt{SHA256} Hash function and shared symmetric keys to implement authenticated channels. We evaluate both protocols in two testbeds - (a) A geo-distributed testbed on AWS for the oracle network application and (b) A distributed testbed on Raspberry Pi devices for the object detection application. 

\paragraph{AWS testbed} We evaluate \protocol on Amazon Web Services (AWS) with varying nodes $n=16,64,112$, and $160$. We run both protocols on \texttt{t2.micro} nodes, each with 1 CPU core and 2GB RAM. We create a geo-distributed testbed of $n$ nodes to simulate execution over the internet. We distribute the nodes equally across 8 regions: N. Virginia, Ohio, N. California, Oregon, Canada, Ireland, Singapore, and Tokyo. 

\paragraph{Embedded Device Testbed} We set up an embedded testbed consisting of 15 Raspberry Pi 4-B devices connected with the help of a network switch functioning as a router. These devices are used in most commercially available Drones and other CPS. Each device has 2 GB of RAM and 4 CPU cores. We emulate a large network of drones by running multiple processes of \protocol{} on each device. We evaluate both protocols with varying nodes $n=43,85,127, 169$. 

\paragraph{Comparison to prior works} We compare \protocol{} to FIN~\cite{duan2023sigfreeacs}, the State-of-the-Art ACS protocol, and Abraham \etal~\cite{abraham2004approxagreement}, the current best approximate agreement protocol. We also implement FIN and Abraham \etal in Rust\cite{bandarupalli2024delphiRS}.  
\begin{figure}
	\centering
	\includegraphics[width=\linewidth]{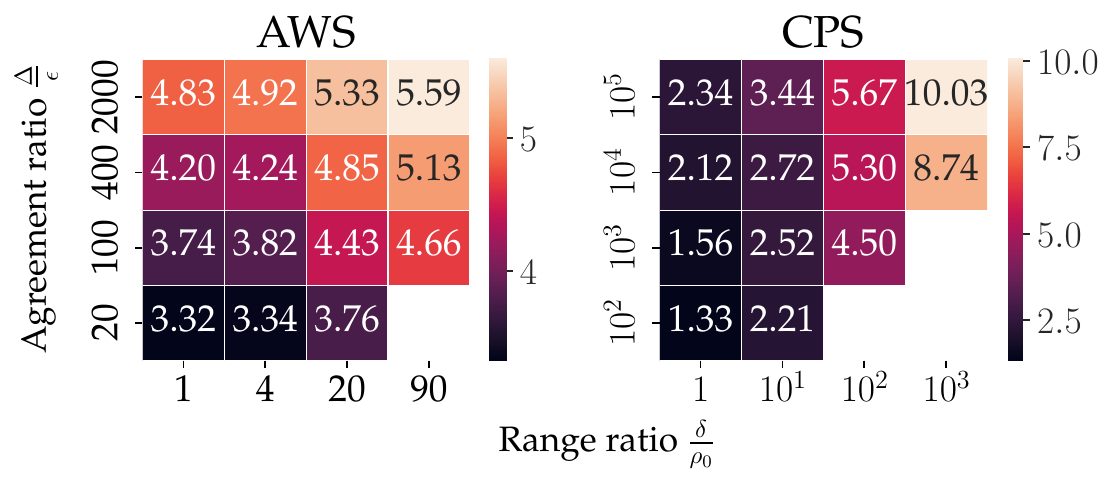}
	\caption{\small\textbf{Runtime patterns of \protocol{} on AWS and CPS}: We study the runtime (in seconds) of \protocol{} with varying agreement ratio $\frac{\maxrange}{\epsilon}$ which controls the round complexity, and range ratio $\frac{\honestrange}{\overshoot{0}}$, which controls the per-round communication. On AWS, the round complexity substantially influences the runtime compared to per-round communication. Whereas in CPS, we see the opposite pattern where the per-round volume of communication has more influence over runtime.}
	\label{fig:heatmapruntime}
\end{figure}

\subsection{Results} 
We run the three protocols - \protocol{}, Abraham \etal, and FIN - in the described applications, each in their respective testbed. We configure \protocol{} according to the calculated parameters for each application. We run the oracle network in the geo-distributed AWS testbed and the drone-based object detection in the CPS testbed. We measure the runtime and consumed network bandwidth of all three protocols. We evaluate \protocol{} under two \honestrange values: $\honestrange = 20\$,180\$$ in the oracle network, and $\honestrange = 5m, 50m$ in the CPS testbed to measure the average case and worst case runtime of \protocol{}. We plot our findings in~\cref{fig:delphi_comp}. 

\paragraph{Scalability Results} We plot and compare the runtimes of the protocols in the oracle network in~\cref{fig:latencyaws}. \protocol{} takes only $\frac{1}{3}$rd and $\frac{1}{6}$th the time taken by FIN and Abraham \etal at $n=160$, which demonstrates its scalability. The reason for \protocol{}'s higher runtime at small $n$ is the high round complexity of \protocol{} coupled with the round trip time of a geo-distributed network. However, \protocol{}'s runtime scales much better mainly because of its computational efficiency and practical communication complexity. Even at a higher $\honestrange$ value, \protocol{} considerably outperforms both prior works because of its low message complexity. We also plot the network bandwidth consumed by each protocol in~\cref{fig:networkbandwidthaws}. \protocol{} consumes an order of magnitude lower bandwidth in both cases of \honestrange than FIN. 

We also plot the runtime of the protocols in the drone-based object detection application in~\cref{fig:latencyrpi}. The devices in this testbed have lower computing power than the AWS testbed, increasing the weight of a protocol's computational efficiency. At $n=169$ processes, \protocol{}'s runtime is $\frac{1}{8}$th that of FIN and Abraham \etal in the average case and the worst case of $\honestrange=5,50$m, respectively. Unlike in the AWS testbed, where the round complexity played an important role at low $n$, \protocol{} outperforms FIN in this application at all $n$ because of their computational efficiency. We also observe that \protocol{}'s runtime is impacted significantly by the range \honestrange, which is in contrast with \protocol{}'s runtime in AWS. 

\paragraph{Runtime analysis of \protocol{}} We investigate the major factors affecting \protocol{}'s runtime in both testbeds. We consider two parameters - (a) The agreement ratio $\frac{\maxrange}{\epsilon}$ and (b) The range ratio $\frac{\honestrange}{\overshoot{0}}$, which control the round complexity and per-round communication complexity. In~\cref{fig:heatmapruntime}, we plot a heatmap of the runtime of \protocol{} with $n=64$ and $n=85$ processes in the AWS and CPS testbeds. We observe that the round complexity is the dominant driver of runtime in AWS. However, we observe the opposite in CPS, where per-round communication volume has a greater influence. The limited network bandwidth of the devices in CPS is the rate-limiting factor. Hence, the parameters $\epsilon$ and $\overshoot{0}$ are crucial for \protocol{}'s performance in AWS and CPS, respectively. 

\subsection{Validity Relaxation}
In the oracle network application, \protocol{}'s output is \honestrange{} distance away from the range of honest inputs. As we observe in~\cref{fig:bitcoinpricehistogram}, the mean \honestrange{} is $25\$$. Taking the current price of Bitcoin as 40000\$, \protocol{} adds $0.05\%$ error to the price value in expectation than FIN and Abraham \etal \revisionchanges{Concretely, \name{}'s output is 25\$ away from the average of honest inputs in expectation. For FIN and Abraham \etal, it is 12.5\$ away from the average in expectation. } Further, this error is less than 0.5\% for more than 99.2\% of the time. In the Drone surveillance application, the 
mean difference along each coordinate is $\honestrange_{\mathsf{mean}}\leq ln(n)\times0.18 = 0.92$m. Essentially, the output of \protocol{} is at most $1.3$m farther away from the target object than FIN and Abraham \etal in expectation. \revisionchanges{Concretely, \name{}'s output is $2.6$m away from the average of honest inputs in expectation. Whereas for FIN and Abraham \etal, the output is $1.3$m away from the average of honest inputs in expectation. }
\section{Related work}
Many agreement protocols were proposed with the weak Validity condition~\cite{mostefaoui2015signature,crain2020binaryba,abraham2022efficient}, which states that if all honest nodes have the same input $v$, then the output must be $v$. Binary BA protocols where nodes input only 0 or 1 offer this form of Validity. Prominent asynchronous binary BA protocols~\cite{mostefaoui2015signature,crain2020binaryba,abraham2022efficient} have an \bigo{n^2} communication complexity (even sub-quadratic~\cite{blum2020asynchronous}) and require \bigo{1} common coins. However, such protocols are unsuitable for applications with multi-valent inputs, where honest nodes often input multiple values. 

\textit{Validated} BA protocols have been proposed, which have an external validity property, ensuring that the output satisfies an external predicate. 
Validated BA (VABA)~\cite{abraham2019asymptotically} and Multi-valued Validated BA (MVBA)~\cite{cachin2001secure,lu2020dumbo,duan2023sigfreeacs} protocols have been proposed with external Validity with a communication complexity of \bigo{\secparam{} n^2}, and multi-valent common coins used for electing nodes as leaders. 
However, local Validity assessment cannot be done for real-world data like price feeds of a stock/cryptocurrency or the location of an object in an area.

Convex Validity enables nodes to agree on a value within the range of honest inputs. One way to achieve Convex Validity is to compute the Asynchronous Common Subset (ACS)~\cite{benor1994acs}, which enables agreement on a set of \highthreshold{} inputs. The median of the ACS is guaranteed to be within the range of honest inputs. Current ACS protocols can be classified as either being BKR-style~\cite{benor1994acs} or MVBA-style~\cite{cachin2001secure}. BKR-style ACS such as HoneyBadgerBFT~\cite{miller2016honey}, Beat~\cite{duan2018beat}, and PACE~\cite{zhang2022pace} use RBC and $n$ binary BA instances. However, these protocols are computationally expensive, requiring \bigo{n} common coins to terminate. A concurrent work  HashRand~\cite{bandarupalli2023hashrand} overcomes this bottleneck by building common coins from computationally efficient Hash functions. However, the described ACS protocols still cost a high \bigo{\secparam{} n^3} bits. 

The MVBA-style ACS uses RBC with MVBA to overcome the high runtime of BKR-style ACS.\@ In exchange for running $n$ binary BAs, MVBA uses a more complex multi-valent common coin for proposal election (PE), implemented using \bigo{\log(n)} binary coins~\cite{cachin2001secure}. Prior MVBA-style protocols like Dumbo2~\cite{guo2020dumbo} and SpeedingDumbo~\cite{guo2022speedingdumbo} trade computational complexity for communication efficiency and therefore use \bigo{n^2} signatures per ACS.\@ FIN-ACS~\cite{duan2023sigfreeacs} is the most computationally efficient ACS protocol thus far, which uses RBC and a multi-valent coin for PE.\@ FIN-ACS has an overall computational complexity of \bigo{\log(n)} coins and uses \bigo{\secparam{} n^3} bits, which we argue is also very expensive in our setting. Information-theoretic approaches like WaterBear~\cite{zhang2022waterbear} do not use threshold signatures but require \bigo{exp(n)} communication to terminate. 

\approxconsensus{} (AA) protocols have been proposed as an alternative to the compute-intensive exact agreement protocols. Dolev \etal~\cite{dolev1986reaching} proposed the first asynchronous AA protocol with $n=5t+1$ resilience, with \bigo{n^2} communication per round and \bigo{\log_2(\frac{\honestrange}{\epsilon})} rounds. Abraham \etal's~\cite{abraham2004approxagreement} protocol improved this resilience to $n=3t+1$ but requires \bigo{n^3} communication per round. Followup works explored AA over a partially connected network~\cite{vaidya2012iterative}, multi-dimensional AA~\cite{vaidya2013byzantine,mendes2015multidimensional}, and AA with asynchronous fallback~\cite{ghinea2022syncappxaggr,ghinea2023multidimappxaggr}. However, Abraham \etal's per-round communication of \bigo{n^3} bits has not been improved so far.
Binary AA is closely related to the notion of \emph{proxcensus} \cite{fitzi2021new,ghinea2022round}, a generalization of graded consensus.

\section{Conclusion}
The paper introduced Delphi, a deterministic asynchronous convex BA protocol designed for applications like distributed oracles in blockchains and fault-tolerant cyber-physical systems. Addressing the challenge of maintaining convex validity in sensor/oracle nodes measuring the same source, Delphi minimizes communication overhead to $\mathcal{\tilde{O}}(n^2)$ while keeping computation costs low. Unlike existing protocols relying on randomization or approximate agreement techniques with high computational or communication costs, Delphi combines agreement primitives and a novel weighted averaging technique. Experimental results demonstrate Delphi's superior performance, achieving an 8x and 3x improvement in latency over state-of-the-art protocols in CPS and AWS environments for a system with $n=160$ nodes.     
\section{Acknowledgments}
We thank the reviewers and our shepherd Darya Melnyk for the helpful feedback and suggestions to improve this draft significantly. We thank Manoj Patil and Saaransh Jakhar at Supra for collecting data on cryptocurrency price feeds. We also thank Azam Ikram and Mihir Patil for help with experiments on the CPS testbed. 
This work was supported in part by NIFA award number 2021-67021-34252, the National Science Foundation (NSF) under grant CNS1846316 and National Science Foundation Cyber Physical Systems (CPS). C. Liu-Zhang's research was supported by Hasler Foundation Project \#23090 and ETH Zurich Leading House RPG-072023-19.


\iffull{}
\fi

\ifsubmission{}
  \bibliographystyle{plain}
\else
  \bibliographystyle{plain}
\fi
\bibliography{references}


\end{document}
